\theoremstyle{plain}
\newcounter{THM}
\newtheorem{thm}[THM]{Theorem}
\theoremstyle{definition}
\theoremstyle{remark}
\newcounter{REM}
\newtheorem{rem}[REM]{Remark}
\newtheorem{remapp}{Remark}[subsection]
\DeclareMathOperator{\diag}{diag}
\title{Synchronisation-Oriented Design Approach for Adaptive Control}
\author{Namhoon Cho\footnote{Research Fellow, Centre for Autonomous and Cyber-Physical Systems, School of Aerospace, Transport and Manufacturing, \texttt{n.cho@cranfield.ac.uk}}}
\affil{Cranfield University, Cranfield, MK43 0AL, Bedfordshire, United Kingdom}
\author{Seokwon Lee\footnote{Assistant Professor, School of Mechanical Engineering, \texttt{seokwonlee@cau.ac.kr}}}
\affil{Chung-Ang University, Seoul, 06974, Republic of Korea}
\author{Hyo-Sang Shin\footnote{Professor, Cho Chun Shik Graduate School of Mobility, \texttt{eshy@kaist.ac.kr}}\footnote{Professor of Guidance, Navigation and Control, Centre for Autonomous and Cyber-Physical Systems, School of Aerospace, Transport and Manufacturing, \texttt{h.shin@cranfield.ac.uk}}}
\affil{Korea Advanced Institute of Science and Technology, Daejeon, 34051, Republic of Korea}
\affil{Cranfield University, Cranfield, MK43 0AL, Bedfordshire, United Kingdom}
\begin{document}
	\maketitle
	
	\begin{abstract}
	This study presents a synchronisation-oriented perspective towards adaptive control which views model-referenced adaptation as synchronisation between actual and virtual dynamic systems. In the context of adaptation, model reference adaptive control methods make the state response of the actual plant follow a reference model. 
	In the context of synchronisation, consensus methods involving diffusive coupling induce a collective behaviour across multiple agents. 
	We draw from the understanding about the two time-scale nature of synchronisation motivated by the study of blended dynamics. The synchronisation-oriented approach consists in the design of a coupling input to achieve desired closed-loop error dynamics followed by the input allocation process to shape the collective behaviour. We suggest that synchronisation can be a reasonable design principle allowing a more holistic and systematic approach to the design of adaptive control systems for improved transient characteristics. Most notably, the proposed approach enables not only constructive derivation but also substantial generalisation of the previously developed closed-loop reference model adaptive control method. 
	Practical significance of the proposed generalisation lies at the capability to improve the transient response characteristics and mitigate the unwanted peaking phenomenon at the same time.
\end{abstract}

\section{Introduction}
Transient performance as well as robustness often govern the choice and tuning of a control system in practice while the asymptotic stability properties are ensured as a prerequisite. The interaction between the uncertainty approximator and the tracking controller along with the physical nature of the uncertainty further complicates the design. Quantifiable characterisation ensuring uniform performance bounds is also challenging.

Development of design methods for improved transient dynamics has been one of the important areas in the research on adaptive control. In particular, the undesirable oscillatory transient response often occurs in Model Reference Adaptive Control (MRAC) systems with the basic direct adaptation law when increasing the rate of adaptation for faster tracking and shorter transient time. The tradeoff between conflicting performance goals leads to the difficulty in tuning. The oscillatory transients can arise from multiple causes; such as 
\begin{enumerate}[label=\roman*)]
	\item lack of damping as in the case of pure integral action leading to overshoot tendency
	\item multivariable nature of the system that can lead to rotational trajectories within the contracting sublevel set of a scalar Lyapunov function
	\item choice of initial condition and external driving input
	\item insufficient time-scale separation between the transient dynamics and the desired reference model
\end{enumerate}
Various approaches have been developed to mitigate the unwanted behaviours depending on the cause of oscillation being addressed; e.g., composite MRAC \cite{Lavretsky2009}, $\mathcal{L}_{1}$ adaptive control \cite{Hovakimyan2010}, dynamic regressor extension and mixing \cite{Ortega2020}, and closed-loop reference model MRAC (CRM-MRAC) \cite{Lavretsky2013}.

With this background, the theory of closed-loop/observer-like reference model has been developed for adaptive control architectures \cite{Lavretsky2013, Gibson2013, Gibson2014, Gibson2015, Qu2020} and showed performance benefits in applications \cite{Wiese2015,Qu2016}. Central to the closed-loop reference model concept is the addition of a tracking error feedback term that resembles the innovation term in a Luenberger observer to the reference model. As a result, the reference model evolves with feedback interaction unlike the fixed open-loop reference model. The key insight is to give distinction between the time constants of the tracking error dynamics and the desired reference model dynamics so that the transients of the former decay more quickly than the latter. Similar understanding was also considered important for the loop shaping via design of the state predictor in $\mathcal{L}_{1}$ adaptive control architecture (see Remark 2.1.1 and Sec. 2.1.6 of \cite{Hovakimyan2010}).

However, the CRM-MRAC method requires careful choice of its design parameters to obtain the expected improvements in transient performance. A bad choice of the combination of learning rate and observer gain may result in significantly degraded performance, e.g., in terms of the input rate. The method with a fixed observer gain is known to suffer from water-bed effects in that using a small observer gain leads to insufficient time-scale separation that may cause high-frequency oscillation while using a large observer gain leads to slow dynamics. This tradeoff between improved transient performance and improved converge rate of parameters poses a major challenge in the design and implementation of the CRM-MRAC systems. In \cite{Yuksek2021}, a policy for variable observer gain was trained by using reinforcement learning to increase the observer gain in the initial phase of the adaptation process for improved transient performance and to decrease the observer gain later for faster system response. Nonetheless, the tuning difficulty is not completely resolved, raising the necessity to revisit the problem.

This study aims to develop a general design approach which can overcome the shortcomings of the existing CRM-MRAC method. The key insight is that the MRAC architecture with a reference model interacting with the plant via feedback connection is conceptually similar to the consensus methods for multi-agent synchronisation. Adaptation based on closed-loop reference model can be formalised as synchronisation between the physical plant and the virtual dynamic system. Motivated by this connection, the main objective is to develop a constructive design approach for adaptive control based on synchronisation with a virtual dynamics playing the role of a closed-loop reference model. 

Synchronisation of multiple agents has been studied to realise distributed autonomous systems. Consensus strategies lay the foundation for distributed computational tasks such as formation control, optimisation, and information fusion \cite{Ren2005a, Ren2005b, Ren2007, Oh2015}. Specifically, recent studies on practical synchronisation of heterogeneous multiple agent system showed that the collective behaviour approximately follows \emph{the average of individual dynamics}, i.e., the blended dynamics, under sufficiently strong diffusive coupling if the blended dynamics is stable \cite{Kim2016, Lee2020, Lee2022b}. This theoretical finding has led to the establishment of a concrete design methodology that first designs the blended dynamics and then assigns each agent to the individual dynamics \cite{Lee2022a, Lee2022c}.


This study presents a synchronisation-oriented design approach for adaptive control by taking insights from the blended dynamics approach. The blended dynamics approach highlights the importance of time-scale separation in achieving synchronisation and generation of desired collective behaviours. Similarly, this study emphasises the two time-scale nature of adaptive control problem which involves synchronisation of two agents. The proposed approach is to design the coupling input first and allocate it to the input variables of plant and virtual dynamics. To clearly explain the main concept, this study considers a linear system with matched uncertainty.

The synchronisation-oriented design approach systematically generalises the CRM-MRAC method developed in \cite{Lavretsky2013, Gibson2013, Gibson2014}. The proposed approach provides additional degrees-of-freedom for adjusting performance through the allocation of coupling input between the plant and the virtual dynamics without changing the tracking error dynamics. It turns out that the existing CRM-MRAC method corresponds to the special case where the diffusive coupling input is allocated only to the virtual dynamics, that is, the tracking error feedback term exists only in the reference model. Unlike the CRM-MRAC architecture of \cite{Lavretsky2013, Gibson2013, Gibson2014}, the proposed approach allows the plant to contribute to form the coupling input. Also, the proposed approach enables prescription of a stable higher-order nominal dynamics for the tracking error through the design of coupling input. In this manner, the notion of synchronisation supersedes the notion of adding tracking error feedback term to the reference model as the central principle for improving transient characteristics of an adaptive control system.

Regarding the practical advantage, the synchronisation-oriented design approach mitigates the tuning difficulty of the CRM-MRAC system due to the tradeoff between transients and system speed. The proposed approach is capable of adjusting the uncertainty cancellation behaviour blending instantaneous uncertainty rejection and online model learning depending on the allocation of coupling input. The weighting factor introduced to define the weighted average of the states representing the collective behaviour determines the allocation strategy. This study shows that choosing this weighting factor to minimise the gap between the virtual dynamics (i.e., the closed-loop reference model) and its ideal counterpart enables alleviation of the unwanted peaking phenomenon observed in the existing CRM-MRAC method with a strong coupling gain (i.e., a large observer gain) due to the decreased convergence rate of parameter adaptation. Such allocation choice can be realised since the plant also actively participates in the synchronisation process. In this way, the proposed approach can improve the transient performance while avoiding the undesirable effects of slow system response because of the additional flexibility, overcoming the limitation of the CRM-MRAC.

The rest of the paper is organised as follows: Section \ref{Sec:SyncAC} presents the synchronisation-oriented approach to the design of MRAC system. Section \ref{Sec:Relations} discusses the proposed approach as a generalisation of the existing CRM-MRAC method. Since this study mainly focuses on highlighting the importance of the structure that allows for more flexibility in coupling input allocation, an overview of the uncertainty approximation methods for adaptive control which is needed for completeness of the paper is deferred to Appendix. Section \ref{Sec:NumSim} provides a numerical simulation showing the effects of different coupling input design and allocation on performance in a direct MRAC setting to demonstrate the additional design flexibility provided by the proposed approach. Section \ref{Sec:Concls} concludes the paper with a summary of remarks.

\section{Synchronisation-Oriented Approach to Adaptive Control} \label{Sec:SyncAC}
The proposed design approach proceeds with the insights gained from distributed consensus algorithms; i) \emph{both} the plant and the reference model in MRAC system can actively participate in the synchronisation process via their inputs, and ii) the input of each system can be defined in various ways without changing the state error dynamics. This section is devoted to the detailed development of the idea into a systematic and generalised approach towards the design of MRAC systems.

\subsection{System Description}
Consider the class of single-input single-output (SISO) systems given by
\begin{equation} \label{Eq:sys_p}
	\Sigma_{p}: \qquad
	\begin{aligned}
	\dot{x}\left(t\right) &= Ax\left(t\right)+b\left(u\left(t\right)+\Delta\left(t\right)\right) +b_{r}r\left(t\right),\quad x\left(0\right)=x_{0}\\
	y\left(t\right) &= c^{T}x\left(t\right)
	\end{aligned}
\end{equation}
where $x\left(t\right)\in\mathbb{R}^{n}$, $u\left(t\right)\in\mathbb{R}$, $y\left(t\right) \in \mathbb{R}$, and $r\left(t\right)\in\mathbb{R}$ represent the state, the control input, the output of the system, and the exogenous reference, respectively; $A \in \mathbb{R}^{n\times n}$, $b\in\mathbb{R}^{n}$, $b_{r}\in\mathbb{R}^{n}$, and $c\in\mathbb{R}^{n}$ are known constant matrices that render $\left(A,b\right)$ controllable; $\Delta\left(t\right) \in \mathbb{R}$ is the bounded matched uncertainty which is the unknown component of $\Sigma_{p}$. We will first proceed without specifying any parameterisation structure for the uncertainty whose impact to the system is confined within the column space of control effectiveness matrix. 

This study considers full state feedback control of the SISO systems affected by matched uncertainties to illustrate the main point of the synchronisation-oriented approach. Note that the theory can be developed for the multi-input multi-output case in a similar manner, but it is not pursued in the present paper. The control design objective is output tracking which requires the output $y\left(t\right)$ to follow a given piecewise-continuous bounded reference signal $r\left(t\right)$ with performance guarantees for both steady-state and transient responses. Another design objective is to perform adaptation without undesirable oscillatory transients.

The plant controller $u\left(t\right)$ can be structured as a composition of the baseline component $u_{base}\left(t\right)$ which achieves stable output tracking in the absence of uncertainty, the adaptive component $u_{ad}\left(t\right)$ that augments the controller to cancel uncertainties, and an additional component $u_{c}\left(t\right)$ that will be designed to achieve synchronisation with the virtual system. The controller is given by
\begin{equation} \label{Eq:u_p}
	u\left(t\right) = u_{base}\left(t\right) + u_{ad}\left(t\right) + u_{c}\left(t\right)
\end{equation}
with
\begin{align}
	u_{base}\left(t\right) &= -k_{m}^{T}x\left(t\right) -k_{r}r\left(t\right) \label{Eq:u_base}\\
	u_{ad}\left(t\right) &= -\hat{\Delta}\left(t\right) \label{Eq:u_ad}
\end{align}
where $k_{m} \in \mathbb{R}^{n}$ satisfies $A_{m} = A - bk_{m}^{T}$ for a Hurwitz constant matrix $A_{m} \in \mathbb{R}^{n \times n}$, $k_{r} \in \mathbb{R}$ satisfying $b_{m}=-bk_{r}+b_{r}$ enforces the unity DC gain of $c^{T}\left(sI-A_{m}\right)^{-1}b_{m}$, and $\hat{\Delta}\left(t\right)$ represents the approximation/estimation of the uncertainty. 

The closed-loop system can be described by substituting Eqs. \eqref{Eq:u_p}-\eqref{Eq:u_ad} into Eq. \eqref{Eq:sys_p} as
\begin{equation} \label{Eq:sys_p_CL}
	\Sigma_{p}^{CL}: ~~
	\begin{aligned}
	\dot{x}\left(t\right) &= A_{m}x\left(t\right)+b_{m}r\left(t\right) + b\left(u_{c}\left(t\right)-\tilde{\Delta}\left(t\right)\right), \quad x\left(0\right) =x_{0}\\
	y\left(t\right) &= c^{T}x\left(t\right)
	\end{aligned}
\end{equation}
where $\tilde{\Delta}\left(t\right):= \hat{\Delta}\left(t\right) - \Delta\left(t\right)$ denotes the uncertainty approximation error.

\begin{rem}
	At this stage, we are considering $u_{c}\left(t\right)$ as a placeholder introduced for further terms in the controller. The distinction between $u_{ad}\left(t\right)$ and $u_{c}\left(t\right)$ might appear notional in that it is possible to define each term to subsume the other term. Nonetheless, this study introduces $u_{c}\left(t\right)$ as a separate term to keep everything associated with uncertainty cancellation within $u_{ad}\left(t\right)$ so that the theoretical development can be more agnostic to the learner, i.e., uncertainty approximator.
\end{rem}

\subsection{Adaptive Control as Two-Agent Synchronisation} \label{SubSec:AC_as_Sync}
This study presents a synchronisation-oriented approach which unfolds its design process based on the perspective that adaptive control can be understood as multi-agent synchronisation. The proposed approach is motivated by the theory of blended dynamics developed in \cite{Kim2016, Lee2020, Lee2022a, Lee2022b, Lee2022c} for distributed synchronisation of multiple heterogeneous agents. For the purpose of adaptive control, this approach introduces a known virtual system $\Sigma_{m}$ as the target for synchronisation with the given physical system $\Sigma_{p}$ to compensate for the effects of uncertainty in $\Sigma_{p}$. Let us consider the virtual dynamic system given by
\begin{equation} \label{Eq:sys_m}
	\Sigma_{m}: \qquad
	\begin{aligned}
	\dot{x}_{m}\left(t\right) &= A_{m}x_{m}\left(t\right)+b_{m}r\left(t\right)+ U_{m}\left(t\right), \quad x_{m}\left(0\right)=x_{m_{0}} \\
	y_{m}\left(t\right) &= c^{T}x_{m}\left(t\right)
	\end{aligned}
\end{equation}
where $U_{m}\left(t\right) \in \mathbb{R}^{n}$ is the control input of the virtual system, and $\left(A_{m},b_{m}\right)$ describes the desired nominal reference dynamics. The virtual system is called the explicit reference model in MRAC literature or the state predictor in $\mathcal{L}_{1}$ adaptive control literature. In the viewpoint of multi-agent control, the virtual system is simply an agent participating in the network.

The general design principle of the proposed approach consists in i) quick synchronisation between the states of $\Sigma_{p}^{CL}$ and $\Sigma_{m}$, and ii) shaping of the collective behaviour to achieve desirable response by \emph{assembling} the agents' closed-loop dynamics. For these purposes, the synchronisation-oriented approach introduces the input signals $u_{c}\left(t\right)$ and $U_{m}\left(t\right)$ into the systems $\Sigma_{p}^{CL}$ in Eq. \eqref{Eq:sys_p_CL} and $\Sigma_{m}$ in Eq. \eqref{Eq:sys_m}, respectively. 

The proposed approach boils down to the design of \emph{both} $u_{c}\left(t\right)$ and $U_{m}\left(t\right)$ to achieve the stated control objectives. The studies on distributed synchronisation for a large number of connected agents usually assume that each agent employs a diffusive coupling input of an identical structure. In contrast, the coupling inputs introduced in $\Sigma_{p}^{CL}$ and $\Sigma_{m}$ are not necessarily of the same form in the adaptive control context. Instead, either of the two dynamic systems can provide parts of the coupling terms that together constitute a desired state error dynamics when assembled. 

It will be shown in Sec. \ref{Sec:Relations} that the proposed approach provides a generalisation of the previously developed MRAC methods with open-loop or closed-loop reference model.

\subsection{Two Time-Scale Nature of Adaptive Control} \label{SubSec:tscale_sep}
Let us define the state error $e\left(t\right)$ and the weighted average of states $z\left(t\right)$ as follows:
\begin{align}
	e\left(t\right) &:= x_{m}\left(t\right) - x\left(t\right) \label{Eq:e_defn}\\
	z\left(t\right) &:= \mu x\left(t\right) + \left(1-\mu\right) x_{m}\left(t\right) = -\mu e\left(t\right) + x_{m}\left(t\right) \label{Eq:z_defn}
\end{align}
where $0 \leq \mu \leq 1$ is a constant weighting factor. The state $x\left(t\right)$ and $x_{m}\left(t\right)$ of the plant and the virtual dynamic system, respectively, can be represented in terms of the state error $e\left(t\right)$ and the weighted average of states $z\left(t\right)$ as
\begin{align} 
	x\left(t\right) & = z\left(t\right) - (1-\mu)e\left(t\right) \label{Eq:x_coord_trans}\\
	x_{m}\left(t\right) &= z\left(t\right)+\mu e\left(t\right) \label{Eq:x_m_coord_trans} 
\end{align}
showing that the variables are related to each other through an invertible coordinate transformation. 

\begin{rem}
The above change of coordinate relation suggests that we can analyse the properties of plant state response via the knowledge about $e\left(t\right)$ and $z\left(t\right)$. From Eq. \eqref{Eq:x_coord_trans}, we have
\begin{equation} \label{Eq:tri_ieq}
	\begin{aligned}
	\left\|x\left(t\right)\right\| &\leq \left\|z\left(t\right)\right\| + \left(1-\mu\right)\left\|e\left(t\right)\right\|\\
	&\leq \sup\limits_{\tau \in \left[0,t\right]}\left\|z\left(\tau\right)\right\| + \left(1-\mu\right)\sup\limits_{\tau \in \left[0,t\right]}\left\|e\left(\tau\right)\right\|
	\end{aligned}
\end{equation}
for any proper choice of vector norm verifying the triangle inequality. The relation shows that the uniform bound of $x\left(t\right)$ can be estimated from the bounds of transformed signals $e\left(t\right)$ and $z\left(t\right)$.
\end{rem}

The dynamics of state error can be obtained by taking the difference between Eqs. \eqref{Eq:sys_p_CL} and \eqref{Eq:sys_m} as
\begin{equation} \label{Eq:sys_e}
	\Sigma_{e}=\Sigma_{m}-\Sigma_{p}^{CL}: \qquad
	\begin{aligned}
	\dot{e}\left(t\right) &= A_{m}e\left(t\right) + U_{m}\left(t\right) -bu_{c}\left(t\right) + b\tilde{\Delta}\left(t\right)\\
	e\left(0\right) &=  x_{m_{0}} - x_{0}
	\end{aligned}
\end{equation}
Likewise, the dynamics of weighted state average can be obtained by the linear combination of Eqs. \eqref{Eq:sys_p} and \eqref{Eq:sys_p_CL} as
\begin{equation} \label{Eq:sys_z}
	\Sigma_{z}=\mu\Sigma_{p}^{CL}+\left(1-\mu\right)\Sigma_{m}:\qquad
	\begin{aligned}
	\dot{z}\left(t\right) &= A_{m}z\left(t\right) + b_{m}r\left(t\right) +\mu bu_{c}\left(t\right) + \left(1-\mu\right)U_{m}\left(t\right) - \mu b\tilde{\Delta}\left(t\right)\\
	z\left(0\right) &= \mu x_{0} + \left(1-\mu\right)x_{m_{0}}
	\end{aligned}
\end{equation}
It should be noted that the system $\Sigma_{z}$ has discrepancies from the following ideal reference-tracking dynamics.
\begin{equation} \label{Eq:sys_id}
	\Sigma_{id}: \qquad 
	\begin{aligned}
	\dot{x}_{id}\left(t\right) &= A_{m}x_{id}\left(t\right)+b_{m}r\left(t\right)\\
	x_{id}\left(0\right)&= \mu x_{0} + \left(1-\mu\right)x_{m_{0}}
	\end{aligned}
\end{equation}

\begin{rem}
The influence of uncertainty cancellation error $\tilde{\Delta}\left(t\right)$ appears to be reduced in $\Sigma_{z}$ as compared to its effect in the closed-loop plant dynamics $\Sigma_{p}^{CL}$. This is coherent with the observation that the collective behaviour obtained with the blended dynamics approach has certain robustness against external disturbance and parametric variations \cite{Lee2022a}.
\end{rem}


If the synchronisation can be achieved relatively earlier in comparison to the convergence of individual states $x\left(t\right)$ and $x_{m}\left(t\right)$, then we can approximate that $z\left(t\right) \approx x\left(t\right) \approx x_{m}\left(t\right)$, which implies that $x\left(t\right)$ will closely follow the solution $z\left(t\right)$ of the blended dynamics $\Sigma_{z}$. In this sense, one can expect that the transient and steady-state response of the plant state will closely follow the ideal command-following dynamics when both synchronisation and collective behaviour shaping goals (see (C1) and (C2) in Sec. \ref{SubSubSec:DesignGuideline}) are achieved. 

Therefore, a reasonable design approach is to make $\Sigma_{e}$ be the fast time-scale dynamics and let $\Sigma_{z}$ be the slow time-scale dynamics. From this point of view, the term $\mu bu_{c}\left(t\right) +\left(1-\mu\right)U_{m}\left(t\right)$ in Eq. \eqref{Eq:sys_z} can be viewed as a fast-varying but small, bounded, and vanishing perturbation signal acting on the system $\Sigma_{z}$ since the allocated signals $u_{c}\left(t\right)$ and $U_{m}\left(t\right)$ will be a function of the fast decaying signal $e\left(t\right)$.

In this way, the synchronisation-oriented viewpoint naturally accounts for the two time-scale nature of practical adaptive control with improved transients.

\subsection{Coupling Input Allocation}
\subsubsection{Design Considerations} \label{SubSubSec:DesignGuideline}
Motivated by the observations made in Sec. \ref{SubSec:AC_as_Sync} and \ref{SubSec:tscale_sep}, one can formulate a constrained optimisation problem for synthesis of $u_{c}\left(t\right)$ and $U_{m}\left(t\right)$ considering the following requirements:
\begin{enumerate}[label = (C\arabic*)]
	\item For synchronisation, $u_{c}\left(t\right)$ and $U_{m}\left(t\right)$ should render $\Sigma_{e}$ 
	an exponentially stable system in the absence of uncertainty, or at least a practically convergent system considering the presence of uncertainty. 
	
	\item For collective behaviour shaping, $u_{c}\left(t\right)$ and $U_{m}\left(t\right)$ should render $\Sigma_{z}$ 
	as close as possible to $\Sigma_{id}$.

	\item The component allocated to $\Sigma_{p}$ should lie in the column space of plant control effectiveness matrix $b$ since the plant input $u\left(t\right)$ can affect state change only along the direction of $b$.

	\item The finite bandwidth of the physical actuator places an upper bound on the frequency up to which commanded inputs can be achieved. Unlike the virtual dynamic system $\Sigma_{m}$ which can perfectly realise a commanded control input, the control action for the actual plant $\Sigma_{p}$ is always subject to the hardware limit.

	\item More implicit practical considerations include the necessity to avoid oscillatory time-domain responses, the requirement to keep signals in the feedback loop (uniformly) bounded, and the importance of loop shaping for assuring robustness.
\end{enumerate}

\subsubsection{Desired Synchronisation Dynamics} \label{SubSubSec:Sigma_e_CL}
The state error dynamics given in Eq. \eqref{Eq:sys_e} is already input-to-state stable with respect to bounded $\tilde{\Delta}\left(t\right)$ as the driving input for zero $u_{c}\left(t\right)$ and $U_{m}\left(t\right)$ since $A_{m}$ is Hurwitz. However, the eigenvalues of $A_{m}$ usually tend to be located at moderately slow region as the baseline controller is designed for the nominal plant to avoid abrupt time-domain response and also to meet certain robustness requirements in terms of stability margins. 

Thus, the term $U_{m}\left(t\right)-bu_{c}\left(t\right)$ can be prescribed by specifying the desired state error dynamics i) to improve the overall response rate of $\Sigma_{e}$ without inducing overshooting behaviour that may manifest itself as oscillatory transients when propagated through the loop, ii) to limit the impact of nonzero $\tilde{\Delta}\left(t\right)$ on $e\left(t\right)$, and iii) to make the controlled response scale with the reference command for overall consistency and predictability. To meet these loop shaping goals, a reasonable design is to set the desired state error dynamics by a linear time-invariant (LTI) system.

Considering the possibility to increase the system order, the integral augmented state error dynamics can be written as
\begin{equation} \label{Eq:sys_e_CL}
	\Sigma_{e}^{CL}: \qquad 
	\begin{aligned}
		\dot{\mathbf{e}}_{I}^{l}\left(t\right) &= A_{e}\mathbf{e}_{I}^{l}\left(t\right) + B_{e}\left(U_{c}\left(t\right)+b\tilde{\Delta}\left(t\right)\right)\\
		\mathbf{e}_{I}^{l}\left(0\right) &= \mathbf{e}_{I_{0}}^{l}
	\end{aligned}
\end{equation}
with
\begin{equation} \label{Eq:Ae_Be}
	A_{e} = \begin{bmatrix}
		A_{m}	& 0		&\cdots & 0 & 0\\
		I 	& 0 	&\cdots & 0 & 0\\
		0	& I	 	&\cdots	& 0 & 0\\
	\vdots  & \vdots &\ddots & \vdots &\vdots\\
		0	& 0		&\cdots & I & 0
	\end{bmatrix},\quad
	B_{e} = \begin{bmatrix}
		I\\
		0\\
		0\\
		\vdots\\
		0
	\end{bmatrix}
\end{equation}
where 
\begin{equation} \label{Eq:e_I_vec}
	\mathbf{e}_{I}^{l}\left(t\right) = \begin{bmatrix}
		e^{\left<0\right>}\left(t\right)\\
		e^{\left<1\right>}\left(t\right)\\
		\vdots\\
		e^{\left<l\right>}\left(t\right)
	\end{bmatrix}
\end{equation}
for some $l\geq 0$ is the vector aggregating the error integrals defined by
\begin{equation} \label{Eq:e_I}
	\begin{aligned}
		e^{\left<0\right>}\left(t\right) &= e\left(t\right)\\
		\dot{e}^{\left<i\right>}\left(t\right) &= e^{\left<i-1\right>}\left(t\right), ~~ e^{\left<i\right>}\left(0\right) = 0, ~\text{for}~ i=1,2,\ldots,l
	\end{aligned}
\end{equation}
and 
\begin{equation} \label{Eq:U_c_defn}
	U_{c}\left(t\right) := U_{m}\left(t\right) -bu_{c}\left(t\right)
\end{equation}
is the coupling input.

Obviously, a state feedback form design suffices the purpose of stable synchronisation.
\begin{equation} \label{Eq:U_c_design}
	U_{c}\left(t\right)=-K_{e}\mathbf{e}_{I}^{l}\left(t\right)
\end{equation}
Depending on the order of desired error dynamics, the simplest choice is the proportional (P) or proportional-integral (PI) coupling given by
\begin{equation} \label{Eq:alloc_constr}
	U_{c}\left(t\right) =\begin{cases}
		-k_{P} e\left(t\right) & \text{(P coupling)}\\
		-k_{P} e\left(t\right) -k_{I}e^{\left<1\right>}\left(t\right) & \text{(PI coupling)}
	\end{cases}
\end{equation}
for positive scalar gains $k_{P}$ and $k_{I}$. 

The coupling gains are the design parameters that can be chosen considering the following practical requirements:
\begin{enumerate}[label = (G\arabic*)]
	\item For the case of P coupling, the state error dynamics becomes a first-order system given by
	\begin{equation} \label{Eq:sys_e_1}
		\Sigma_{e}^{CL1}:~ \dot{e}\left(t\right)+\left(k_{P}I-A_{m}\right)e\left(t\right) = b\tilde{\Delta}\left(t\right)
	\end{equation} 
	The design parameter $k_{P}$ determines the bandwidth $\omega_{e}$ of $\Sigma_{e}^{CL1}$. Considering the fact that the poles of $A_{m}$ are relatively slow in most practical applications, the approximate relation $k_{P} \approx \omega_{e}$ is justified for a high coupling gain $k_{P}\gg\left|\lambda\left(A_{m}\right)\right|$. Increasing $k_{P}$ will result in i) higher overall rate of convergence in the response of $e\left(t\right)$, ii) dominance of $k_{P}$ over the slow poles of $A_{m}$ in governing the response of $e\left(t\right)$, and iii) reduced error bound $\left\|e\left(t\right)\right\|_{\mathcal{L}_{\infty}}$ assuming boundedness of $\tilde{\Delta}\left(t\right)$.

	\item For the case of PI coupling, the state error dynamics becomes a second-order system given by
	\begin{equation} \label{Eq:sys_e_2}
		\Sigma_{e}^{CL2}: ~ \dot{e}\left(t\right)+\left(k_{P}I-A_{m}\right)e\left(t\right)+k_{I}\int e\left(\tau\right)d\tau = b\tilde{\Delta}\left(t\right)
	\end{equation} 
	The two design parameters $k_{P}$ and $k_{I}$ determine the natural frequency $\omega_{n}$ and the damping ratio $\zeta$ of $\Sigma_{e}^{CL2}$. Again, a reasonable approximation considering the time-scale of system matrix $A_{m}$ representing the ideal command-following dynamics is to state that $\omega_{n} \approx \sqrt{k_{I}}$ and $\zeta \approx \frac{k_{P}}{2\sqrt{k_{I}}}$ hold with high coupling gain $k_{P}\gg\left|\lambda\left(A_{m}\right)\right|$. Similar to the case of P coupling, increasing $\omega_{n}$ will result in i) faster response in $e\left(t\right)$, ii) reduced error bound $\left\|e\left(t\right)\right\|_{\mathcal{L}_{\infty}}$ assuming boundedness of $\tilde{\Delta}\left(t\right)$ and $\dot{\tilde{\Delta}}\left(t\right)$. The choice of the damping ratio is critically important to avoid oscillatory signals entering the control loop.
	
	\item 
	The maximum acceptable bandwidth of the closed-loop error dynamics $\Sigma_{e}^{CL}$ is practically bounded from above by the lowest frequency of the signal content in $\tilde{\Delta}\left(t\right)$. This bandwidth limitation is required to effectively filter out the uncertainty approximation error $\tilde{\Delta}\left(t\right)$ entering into the state error dynamics as well as the slower blended dynamics. If the uncertainty is approximated (for instantaneous rejection as it will be explained in Sec. \ref{SubSec:DOB}) by $\hat{\Delta}\left(t\right) = C\left(s\right)\Delta\left(t\right)$ where $C\left(s\right)$ is a strictly proper and stable scalar low-pass filter with bandwidth of $\omega_{f}$, then $\tilde{\Delta}\left(t\right)$ represents the high-pass-filtered signal lying in the frequency region above $\omega_{f}$. In this case, the coupling gains $k_{P}$ and $k_{I}$ can be chosen so that the bandwidth $\omega_{e}$ of $\Sigma_{e}$ does not exceed $\omega_{f}$.
\end{enumerate}

\subsubsection{Best Possible Blended Dynamics} \label{SubSubSec:best_Sigma_z}
The coupling input $U_{c}\left(t\right)$ governing the synchronisation dynamics can be \emph{allocated} to the input of each system in a manner similar to control allocation for overactuated systems. The transient response characteristics depends on the allocation choice even if the synchronisation dynamics remains the same. Therefore, the allocation should be performed so that the controlled plant exhibits desired performance characteristics.

The problem of coupling input allocation is to determine the pair of $u_{c}\left(t\right)$ and $U_{m}\left(t\right)$ subject to the hard constraint given by Eq. \eqref{Eq:U_c_design}, which is an underdetermined system of linear equations. The number of decision variables is $n+1$ while the constraint specifies $n$ equations. The number of excess degrees-of-freedom remaining after enforcing the constraint is equal to the dimension of plant control input. This additional degree-of-freedom can be exploited to shape the closed-loop response by minimising a physically relevant convex objective function.

One reasonable direction is to keep the response of $\Sigma_{z}$ as close as possible to $\Sigma_{id}$ through pointwise minimisation of the deviation between $\dot{z}$ and $\dot{x}_{id}$. Given a desired coupling input $U_{c}$, the associated optimisation problem can be written as
\begin{equation}\label{Eq:prob_J_pert_con}
	\begin{aligned}
		\underset{u_{c}, U_{m}}{\text{minimise}}~~~   & J_{pert} = \left\| W\left\{\mu b u_{c}\left(t\right) + \left(1-\mu\right)U_{m}\left(t\right)\right\}\right\|_{p}\\
		\text{subject to}~~  & U_{m}-bu_{c} = U_{c}
	\end{aligned}
\end{equation}
where $W>0$ and $p\in \left[1,\infty\right]$. By substituting the constraint relation $U_{m} = U_{c} + bu_{c}$ into the objective function, the equivalent unconstrained problem is to minimise
\begin{equation}\label{Eq:prob_J_pert}
	J_{pert} = \left\| W\left\{b u_{c}  + \left(1-\mu\right)U_{c}\right\}\right\|_{p}
\end{equation}
at each instance. Modern convex programming tools provide efficient computational means of solution. 

The optimal allocation can be expressed in closed-form for the case of $p=2$ where the optimality condition can be described by
\begin{equation}\label{Eq:J_pert_optcond}
	\frac{\partial J_{pert}}{\partial u_{c}} = \frac{\left(W\left\{b u_{c}  + \left(1-\mu\right)U_{c}\right\}\right)^{T}}{\left\|W\left\{b u_{c}  + \left(1-\mu\right)U_{c}\right\}\right\|_{2}}Wb=0
\end{equation}
Solving Eq. \eqref{Eq:J_pert_optcond} for $u_{c}$ gives the optimal solution as
\begin{equation} \label{Eq:sol_J_pert}
	\begin{aligned}
		u_{c}^{*} &= -\left(1-\mu\right)\left(b^{T}W^{T}Wb\right)^{-1}b^{T}W^{T}WU_{c}\\
		U_{m}^{*} &= U_{c} + bu_{c}^{*} 
	\end{aligned}	
\end{equation}

\begin{rem}
	A desired blended dynamics for $z$ can be prescribed as the hard constraint for coupling input allocation instead of enforcing a desired synchronisation dynamics for $e$ if collective behaviour shaping takes priority. 
\end{rem}


\subsection{Uncertainty Approximation}
Imperfect uncertainty approximation results in nonzero $\tilde{\Delta}\left(t\right)$ that acts like a perturbation term driving the closed-loop error dynamics, as it is clearly shown in Eqs. \eqref{Eq:sys_e_1} and \eqref{Eq:sys_e_2} for $\Sigma_{e}^{CL1}$ and $\Sigma_{e}^{CL2}$, respectively. The effect of nonzero $\tilde{\Delta}\left(t\right)$ driving the slow time-scale dynamics $\Sigma_{z}$ described in Eq. \eqref{Eq:sys_z} is also prevalent in that the actual plant response may exhibit deviation from $\Sigma_{id}$ even if the synchronisation is achieved with a faster rate. 

There are two possible directions to keep the actual trajectory close to that expected from the ideal unperturbed synchronisation dynamics considering the dynamic systems $\Sigma_{e}^{CL}$ and $\Sigma_{z}$ as the operators acting on the signal $\tilde{\Delta}\left(t\right)$. One is to reduce the magnitude of $\tilde{\Delta}\left(t\right)$ either at each instance or over time. Another is to confine the frequency contents of $\tilde{\Delta}\left(t\right)$ to a region above the bandwidth $\omega_{e}$ of fast time-scale dynamics $\Sigma_{e}^{CL}$ so that the slow time-scale dynamics $\Sigma_{z}$ shows stronger attenuation over the frequency region occupied by $\tilde{\Delta}\left(t\right)$ (See also (G3) in Sec. \ref{SubSubSec:Sigma_e_CL}).

The algorithm for uncertainty approximation, i.e., the process of constructing $\hat{\Delta}\left(t\right)$, has not been specified up to this point as this study mainly focuses on careful design of the control system architecture rather than on the uncertainty approximator. To clearly show the strength of the proposed approach, the basic direct adaptation algorithm for updating the estimated parameter $\hat{\theta}\left(t\right)$ will be considered later in Sec. \ref{Sec:NumSim} assuming that the uncertainty is linearly parameterised as $\Delta\left(t\right)=\Phi\left(x\left(t\right)\right)^{T}\theta$ where $\Phi\left(x\left(t\right)\right)$ denotes the known basis function. Readers are referred to Appendices \ref{SubSec:DOB} and \ref{SubSec:ML} for an overview of the two broadly defined approaches for uncertainty approximation in adaptive control; instantaneous rejection of lumped uncertainty, and online learning of uncertainty model. Although the online learning method will not be considered in the rest of this paper, Appendix \ref{SubSubSec:stable_adapt} provides the stability analysis for the proposed adaptive control system with online-learning-based adaptation for generality of the result.

\section{Relation to Existing Model Reference Adaptive Control Methods} \label{Sec:Relations}
This section aims to show that the proposed approach provides a general framework which can generate the previously developed MRAC methods as special instances. Each instance corresponds to a certain combination of the coupling input design, the coupling input allocation, and the uncertainty approximation method. The cases with $u_{c}\left(t\right)\equiv 0$ are considered to discuss the relations.

\subsection{Direct MRAC with Open-Loop Reference Model} \label{SubSec:ORM_MRAC}
The classical direct MRAC algorithm with an open-loop reference model, i.e., fixed trajectory generator, corresponds to the following design choice:
\begin{enumerate}
	\item Coupling input design: N/A
	\[U_{c}\left(t\right)=0\] 
	\item Coupling input allocation: N/A
	\[u_{c}\left(t\right)=0, \quad U_{m}\left(t\right)=0\]
	\item Uncertainty approximation: direct adaptation
	\[\hat{\Delta}\left(t\right) = \Phi\left(x\left(t\right)\right)^{T}\hat{\theta}\left(t\right), \quad \dot{\hat{\theta}}\left(t\right) = -\Gamma \Phi\left(x\left(t\right)\right) e\left(t\right)^{T}Pb\]
\end{enumerate}
This basic form can achieve the goal of output tracking in the steady-state response with appropriate adaptation mechanism. However, undesirable high-frequency oscillation is often observed in its transient response, especially when a high adaptation gain is used. 

\subsection{Direct MRAC with Closed-Loop Reference Model} \label{SubSec:CRM_MRAC}
The idea of the observer-like reference model in \cite{Lavretsky2013} also known as the closed-loop reference model in \cite{Gibson2013, Gibson2014, Gibson2015, Qu2020, Wiese2015, Qu2016} has been suggested as a design modification that enables smoother transients by making the error dynamics evolve faster than the desired reference model. The introduction of tracking error feedback in the reference model has been perceived critical to its successes. More specifically, the existing state feedback CRM-MRAC architecture presented in \cite{Gibson2013} corresponds to the following design choice:
\begin{enumerate}
	\item Coupling input design: P coupling with $k_{P}>0$
	\[U_{c}\left(t\right)=-k_{P}e\left(t\right)\] 
	\item Coupling input allocation: $\min J_{pert}$ with $\mu=1$
	\[u_{c}\left(t\right)=0, \quad U_{m}\left(t\right)=U_{c}\left(t\right)\]
	\item Uncertainty approximation: direct adaptation
	\[\hat{\Delta}\left(t\right) = \Phi\left(x\left(t\right)\right)^{T}\hat{\theta}\left(t\right), \quad \dot{\hat{\theta}}\left(t\right) = -\Gamma \Phi\left(x\left(t\right)\right) e\left(t\right)^{T}Pb\]
\end{enumerate}
The CRM-MRAC shows improved transient behaviour as compared to the open-loop reference model MRAC, however, the state response of the virtual system $\Sigma_{m}$ may show significant departure from the desired nominal reference dynamics with a large $U_{m}\left(t\right)$. 

\subsection{New MRAC Methods Based on Proposed Framework}
The synchronisation-oriented approach developed in Sec. \ref{Sec:SyncAC} generalises the existing MRAC methods, suggesting the possibility to explore a new area for further improvements. One can make different choices from those shown above for each of the elements to create design variations. For example, selecting PI coupling instead of P coupling will provide a second-order tracking error dynamics. Unlike the framework in \cite{Gibson2013}, the present study introduces an input component responsible for diffusive coupling not only in the virtual dynamics but also in the plant controller via $u_{c}\left(t\right)$. Different coupling input allocation strategy can affect the transient performance and robustness of the resulting collective behaviour while maintaining the same synchronisation dynamics. Indeed, the additional degrees-of-freedom provided in the proposed approach can be utilised to overcome the shortcomings of the CRM-MRAC scheme by minimising the input contribution of $\Sigma_{m}$ for synchronisation. (See also Sec. \ref{Sec:NumSim} for the numerical simulation comparing different allocations of the coupling input.) Furthermore, parameter update mechanism for direct adaptation can be replaced with a stable online parameter learning algorithm to perform composite adaptation.



\section{Numerical Simulation} \label{Sec:NumSim}
This section aims to show the practical value of the synchronisation-oriented approach to design an adaptive control system. The additional design flexibility introduced in coupling input allocation can be exploited to shape the plant state response without changing the error dynamics. The experiment 
will illustrate that the proposed approach takes the advantage and overcomes the drawback of the CRM-MRAC method.



\subsection{System Model}
The short period longitudinal dynamics of F-16 aircraft trimmed at $\left(V_{T},h,\bar{Q},CG\right)=\left(502\text{ft/s}, 0\text{ft}, 300\text{lb/ft}^{2}, 0.35\bar{c}\right)$ is considered for simulation. The system model which was previously used in \cite{Lavretsky2009} for angle-of-attack tracking can be expressed as
\begin{equation} \label{Eq:sys_sim}
	\underbrace{
		\begin{bmatrix}
			\dot{\alpha}\\
			\dot{q}\\
			\dot{e}_{\alpha_{I}}
		\end{bmatrix}
	}_{\dot{x}}
	=
	\underbrace{
		\begin{bmatrix}
			-1.0189	&	0.9051	&	0\\
			0.8223	&	-1.0774	&	0\\
			1		&	0		&	0
		\end{bmatrix}
	}_{A}
	\underbrace{
		\begin{bmatrix}
			\alpha\\
			q\\
			e_{\alpha_{I}}
		\end{bmatrix}
	}_{x}
	+ 
	\underbrace{
		\begin{bmatrix}
			-0.0022\\
			-0.1756\\
			0
		\end{bmatrix}
	}_{b}
	\left[
		\underbrace{\delta_{e}}_{u} 
		+ 
		\underbrace{
			\begin{bmatrix}
				\alpha \\
				q \\
				e^{-\frac{\left(\alpha - \pi/90\right)^{2}}{2 \cdot 0.0233^{2}}}
			\end{bmatrix}^{T}
		}_{\Phi\left(x\right)^{T}} 
		\underbrace{
			\begin{bmatrix}
				-4.6839\\
				-9.8197\\
				1
			\end{bmatrix}
		}_{\theta}
	\right]
	+
	\underbrace{
		\begin{bmatrix}
			0\\
			0\\
			-1
		\end{bmatrix}
	}_{b_{r}}
	\underbrace{\alpha_{cmd}}_{r}
\end{equation}
\begin{equation} \label{Eq:sys_sim_output}
	\underbrace{\alpha}_{y} 
	= 
	\underbrace{
		\begin{bmatrix}
			1 & 0 & 0
		\end{bmatrix}
	}_{c^{T}}
	\underbrace{
		\begin{bmatrix}
			\alpha\\
			q\\
			e_{\alpha_{I}}
		\end{bmatrix}
	}_{x}
\end{equation}
where the angle-of-attack $\alpha$ is in [rad], the pitch rate $q$ is in [rad/s], and the elevator deflection $\delta_{e}$ is in [deg]. The augmented model in Eq. \eqref{Eq:sys_sim} includes the integral of tracking error $e_{\alpha_{I}}$ as a state variable and the angle-of-attack command $\alpha_{cmd}$ as an exogenous driving input. The basis function $\Phi\left(x\right)$ is assumed to be known.

\subsection{Simulation Setup}
The experiment considers the adaptive control system employing the P coupling for synchronisation and the direct adaptation law for parameter update. Minimisation of $J_{pert}$ is the strategy considered for coupling input allocation since $J_{pert}$ defines a family of objectives depending on $\mu$ as described in Sec. \ref{SubSubSec:best_Sigma_z}. 

Each test case takes a different design and allocation of the coupling input by varying the coupling gain $k_{P}$ and the weighting factor $\mu$ while using fixed values for all other parameters including the adaptation gain $\Gamma$. This setup aims to show how the transient performance depends on synchronisation-related parameters. The effect of different allocation can be studied by comparing the results across different $\mu$ for each $k_{P}$ as the nominal error dynamics solely depends on $k_{P}$. Since the proposed approach provides a unified framework as discussed in Sec. \ref{Sec:Relations}, the following special cases can be taken as the reference for comparison with the existing methods:
\begin{itemize}
	\item $k_{P} = 0$: Direct MRAC with open-loop reference model (Sec. \ref{SubSec:ORM_MRAC})
	\item $k_{P} > 0$, $\mu = 1$: Direct MRAC with closed-loop reference model (Sec. \ref{SubSec:CRM_MRAC})
\end{itemize}

Table \ref{Table:sim_params} summarises the simulation parameters. The simulation code written in \texttt{Julia} can be accessed via \cite{Cho2023}.
\begin{table}[h!tb]
	\begin{center}
		\caption{Simulation Parameters} 
		\label{Table:sim_params}
		\begin{tabular}{ccc}
			\hline
			quantity	& value\\
			\hline
			$x_{0}$, $x_{m_{0}}$, $\hat{\theta}_{0}$	&	$\begin{bmatrix}	0	&	0	&	0\end{bmatrix}^{T}$\\
			$Q_{base}$	&	$\diag\left(\begin{bmatrix}	0 & 0 & 100\end{bmatrix}\right)$\\
			$R_{base}$	&	$1$\\
			$k_{m}^{T}$	&	$\mathsf{lqr}\left(A,b,Q_{base},R_{base}\right)$\\
			$k_{r}$		&	$0$\\
			$\psi\left(x\right)$ & $\frac{1}{2}x^{T}\Gamma^{-1}x$\\ 
			$\Gamma$ & $\diag\left(\begin{bmatrix}	400 & 400 & 20\end{bmatrix}\right)$\\
			$\phi\left(x\right)$ & $0$\\
			$Q$ & $\diag\left(\begin{bmatrix}	1 & 800 & 0.1\end{bmatrix}\right)$\\
			$P$ & $\mathsf{lyap}\left(\left(A-bk_{m}^{T}-k_{P}I\right)^{T}, Q\right)$\\
			$\alpha_{cmd}\left(t\right)$ &	
			$\begin{cases}
				0		& \text{for~} 0		\leq t	<	1\\
				5		& \text{for~} 1		\leq t	<	11\\
				-5		& \text{for~} 11	\leq t	<	22\\
				0		& \text{for~} 22	\leq t	<	41\\
				2.5		& \text{for~} 41	\leq t	<	51\\
				-2.5	& \text{for~} 51	\leq t	<	62\\
				0		& \text{for~} 62	\leq t	< 	80
			\end{cases}$ (in [deg])\\
			$J$	for allocation &	$J_{pert}$ with $W=I$ and $p=2$\\
			$k_{P}(=K_{e})$ & \{$0.0$, $1.0$, $10.0$, $100.0$\}\\
			$\mu$	& \{$0.0$, $0.5$, $1.0$\}\\
			\hline
		\end{tabular}
	\end{center}
\end{table}


\subsection{Simulation Results and Discussion}
Figures \ref{Fig_alpha_coll} and \ref{Fig_q_coll} show the state responses for both the plant and the virtual dynamics. Figure \ref{Fig_u_coll} shows the time histories of the input along with each of its components, and Fig. \ref{Fig_Delta_aug_coll} shows the time histories of the true and approximated uncertainty. Also, Figs. \ref{Fig_V_e_coll}-\ref{Fig_V_dtheta_coll} show the time histories of the $2$-norm of vector quantities, namely, the tracking error, the time-derivative of tracking error, the parameter estimation error, and the time-derivative of estimated parameter, respectively. Each subfigure in Figs. \ref{Fig_alpha_coll}-\ref{Fig_V_dtheta_coll} corresponds to the combination $\left(k_{P},\mu\right)$ of its value written above.

\subsubsection{Alleviation of High-Frequency Oscillation}
Increasing $k_{P}$ smoothes the transients in the initial phase, leading to the reduced peak value in the time histories of $\left\|\dot{\hat{\theta}}\left(t\right)\right\|$ as shown in Fig. \ref{Fig_V_dtheta_coll}. Also, increasing $k_{P}$ reduces the high-frequency oscillation in $q\left(t\right)$, $\delta_{e}\left(t\right)$, $e\left(t\right)$, $\dot{e}\left(t\right)$, $\tilde{\theta}\left(t\right)$, and $\dot{\hat{\theta}}\left(t\right)$ as shown in Figs. \ref{Fig_q_coll}, \ref{Fig_u_coll}, \ref{Fig_V_e_coll}, \ref{Fig_V_de_coll}, \ref{Fig_V_theta_coll}, and \ref{Fig_V_dtheta_coll}, respectively.

\subsubsection{Alleviation of Peaking Phenomenon}
Given a fixed $\Gamma$, increasing $k_{P}$ with $\mu=1$ causes the low-frequency high-amplitude oscillation called the peaking phenomenon as shown in Figs. \ref{Fig_alpha_coll} and \ref{Fig_q_coll}. The unwanted behaviour has prevented the existing CRM-MRAC method from increasing the coupling gain $k_{P}$ above certain value. The proposed approach can alleviate the peaking response by decreasing $\mu$ towards zero even if $k_{P}$ is large as shown in Figs. \ref{Fig_alpha_coll} and \ref{Fig_q_coll}, at the expense of less model learning taking place using direct adaptation as discussed below.

\subsubsection{Model Learning Capability of Direct Adaptation Law}
Since the proposed approach prescribes the same dynamics of $e\left(t\right)$ for a given $k_{P}$ regardless of $\mu$, Figs. \ref{Fig_V_e_coll} and \ref{Fig_V_theta_coll} show that the overall pattern of $\left\|e\left(t\right)\right\|$ and $\left\|\tilde{\theta}\left(t\right)\right\|$ is mostly dependent on $k_{P}$. Varying $\mu$ contributes to the slight difference in $\left\|\tilde{\theta}\left(t\right)\right\|$ through the change in $\Phi\left(x\left(t\right)\right)$ due to the varied state responses. 

The direct adaptation law with a fixed $\Gamma$ tends to lose its model learning capability with larger $k_{P}$ as shown in Figs. \ref{Fig_Delta_aug_coll} and \ref{Fig_V_theta_coll}. This can be understood as a consequence of faster decay of $e\left(t\right)$ to the neighbourhood of zero and less rich $\Phi\left(x\left(t\right)\right)$. In such cases, direct adaptation alone is not effective for learning even though the amplitude of oscillation due to the peaking phenomenon can grow very large. The case of $k_{P}=10^{3}$ showed $\dot{\hat{\theta}}\left(t\right) \approx 0$, leading to almost complete silence in $\tilde{\theta}\left(t\right)$. One may need to use a larger $\Gamma$ for tighter steady-state tracking to compensate for the effect of insufficient learning. The results indicate that using a high coupling gain $k_{P}$ with $\mu=0$ tends to isolate the online model learning task from stable tracking control. This suggests the necessity to implement a composite adaptation algorithm when long-term learning of model is desired.

\subsubsection{Blending of Uncertainty Cancallation Strategies}
Overall, increasing $k_{P}$ with the choice $\mu = 0$ leads to the uncertainty cancellation behaviour that incorporates instantaneous uncertainty rejection alongside model learning. The case of $\left(k_{P},\mu\right)=\left(100,0\right)$ in Fig. \ref{Fig_Delta_aug_coll} shows that the model prediction output $-u_{ad}\left(t\right)=\Phi\left(x\left(t\right)\right)^{T}\hat{\theta}\left(t\right)$ alone exhibits a large gap from the true uncertainty $\Delta\left(t\right)$ while the sum of two input components $-u_{ad}\left(t\right)-u_{c}\left(t\right)$ is close to $\Delta\left(t\right)$. The result implies that the fast element $u_{c}\left(t\right)$ contributes more to the control input in the absence of sufficient learned estimate, thus compensating for the lack of learning. In a similar manner to $\mathcal{L}_{1}$-$\mathcal{GP}$ in \cite{Gahlawat2020}, the relative contribution of the learned model output will increase as the model becomes more accurate over time when using a composite adaptation law.

To understand the physical reasons for the observed trend, let us assume the presence of the observer $\Sigma_{o}$ which is designed to have $A_{o}=A_{m}-k_{P}I$. One can easily show that the discrepancy between $x_{m}\left(t\right)$ and $\hat{x}\left(t\right)$ follows $\dot{\tilde{x}}_{m}=\left(A_{m}-k_{P}I\right)\tilde{x}_{m}+b\Phi^{T}\hat{\theta}$ where $\tilde{x}_{m}\left(t\right):=x_{m}\left(t\right)-\hat{x}\left(t\right)=e\left(t\right)-\tilde{x}\left(t\right)$. Then, assuming the boundedness of $\Phi\left(t\right)$, increasing $k_{P}$ results in the decrease of the uniform bound over $\tilde{x}_{m}\left(t\right)$ as $\hat{\theta}\left(t\right)$ is bounded as discussed in Sec. \ref{SubSec:ML}. Therefore, the behaviour of $\Sigma_{m}$ tends to be closer to the pure observer $\Sigma_{o}$ used for the instantaneous rejection of lumped uncertainty with a larger $k_{P}$. 

In summary, the proposed approach tends to adjust its uncertainty cancellation behaviour depending on $\mu$. The particular choice $\mu=0$ tends to blend instantaneous rejection of lumped uncertainty complementarily with online learning of uncertainty model for uncertainty cancellation. 

\subsubsection{Practically Useful Coupling Input Allocation Strategy}
Setting $W=I$, $\mu=0$, and $p=2$ leads to the allocation strategy minimising $J_{pert}=\left\|U_{m}\left(t\right)\right\|_{2}$ at each $t$ subject to the given coupling input realisation constraint. This allocation keeps the same nominal dynamics in $e\left(t\right)$ while minimising the discrepancy between the open-loop reference model $\Sigma_{id}$ in Eq. \eqref{Eq:sys_id} and the closed-loop reference model $\Sigma_{m}$ in Eq. \eqref{Eq:sys_m}. The physical meaning of selecting $\mu = 0$ is to maintain the benefits of introducing distinction between the times scales of $\Sigma_{e}$ and $\Sigma_{z}$ (which is equivalent to $\Sigma_{m}$ when $\mu =0$) in improving the transient response characteristics while alleviating the undesirable consequences of having a feedback-interconnected reference model such as the peaking phenomenon that occurs when $k_{P}$ is large.

\begin{figure}[!htbp]
	\begin{center}
		\includegraphics[width=\textwidth]{./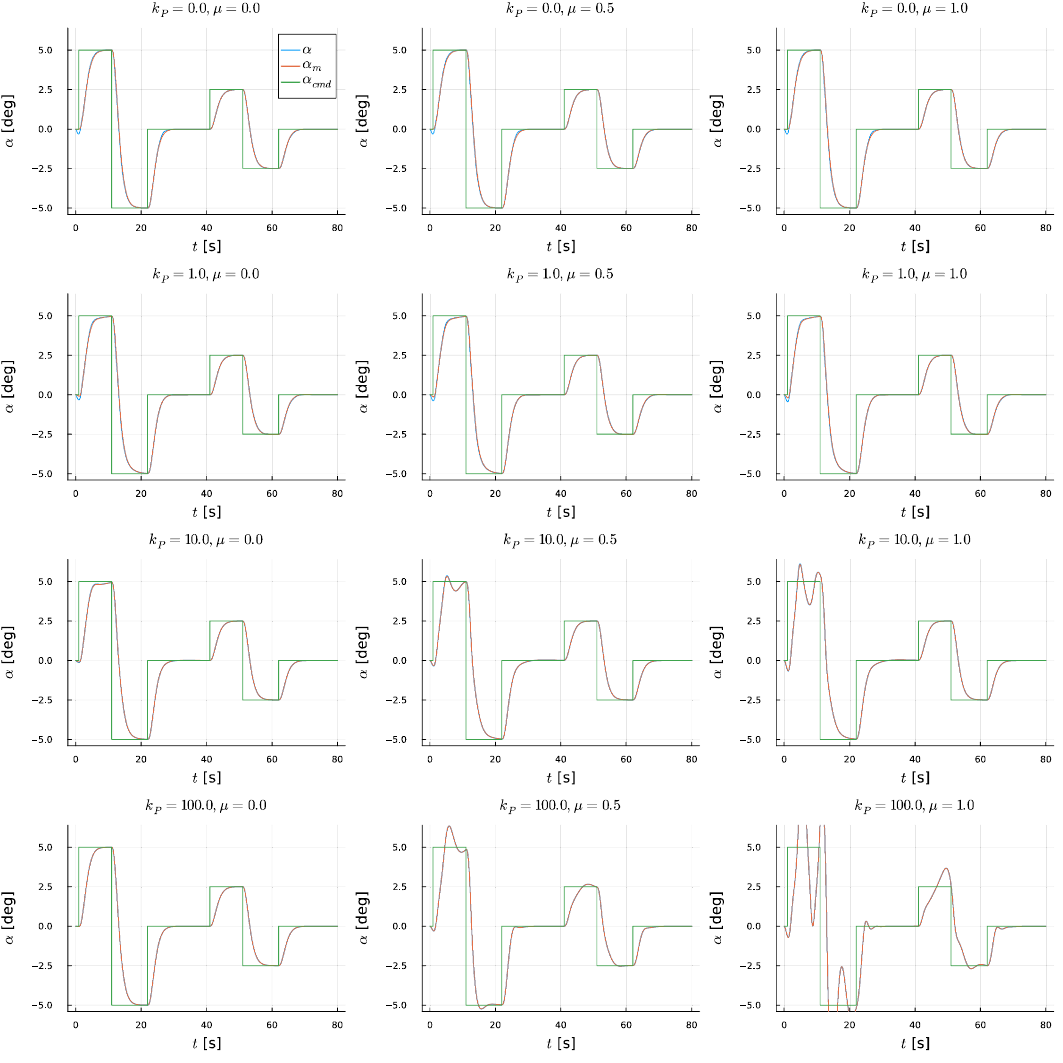}
		\caption{Angle-of-Attack $\alpha$}
		\label{Fig_alpha_coll}
	\end{center}
\end{figure}

\begin{figure}[!htbp]
	\begin{center}
		\includegraphics[width=\textwidth]{./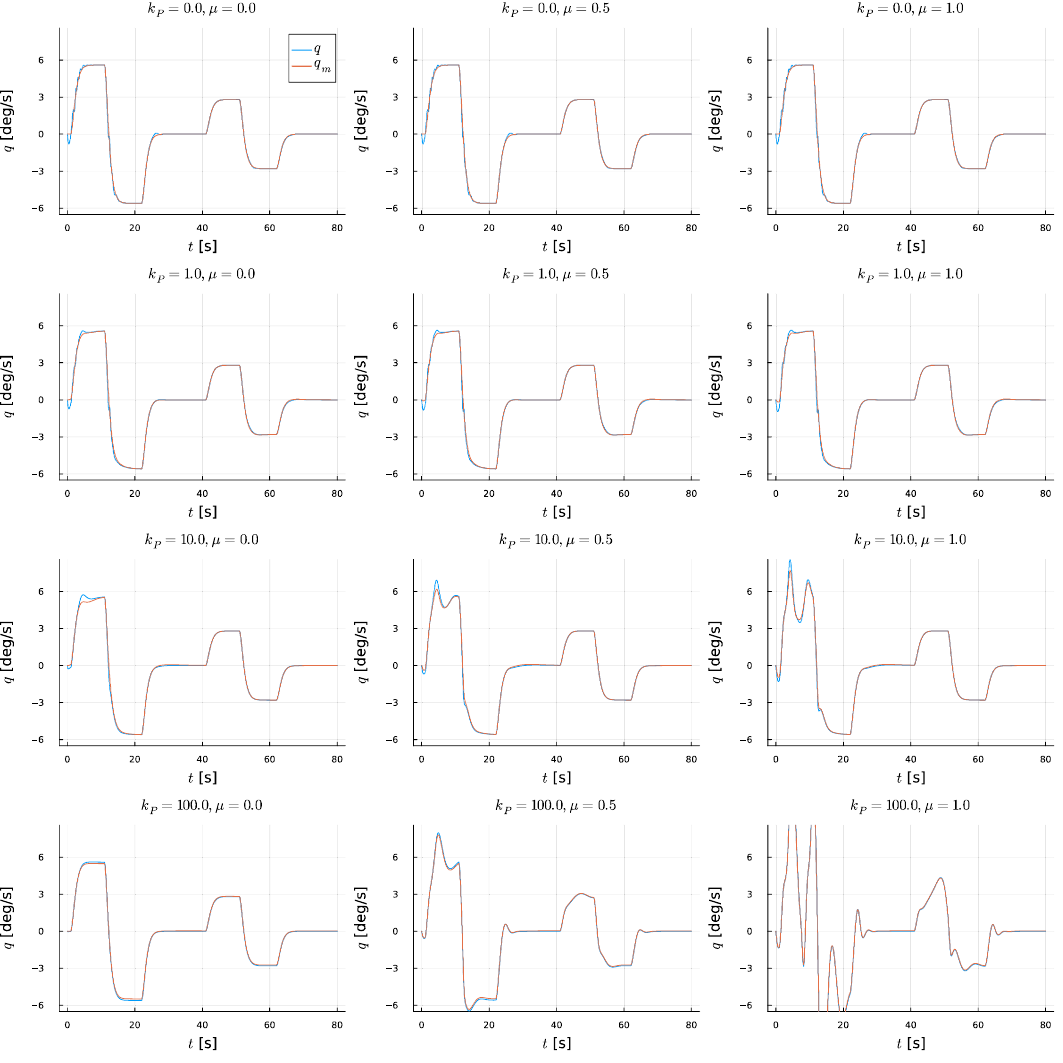}
		\caption{Pitch Rate $q$}
		\label{Fig_q_coll}
	\end{center}
\end{figure}

\begin{figure}[!htbp]
	\begin{center}
		\includegraphics[width=\textwidth]{./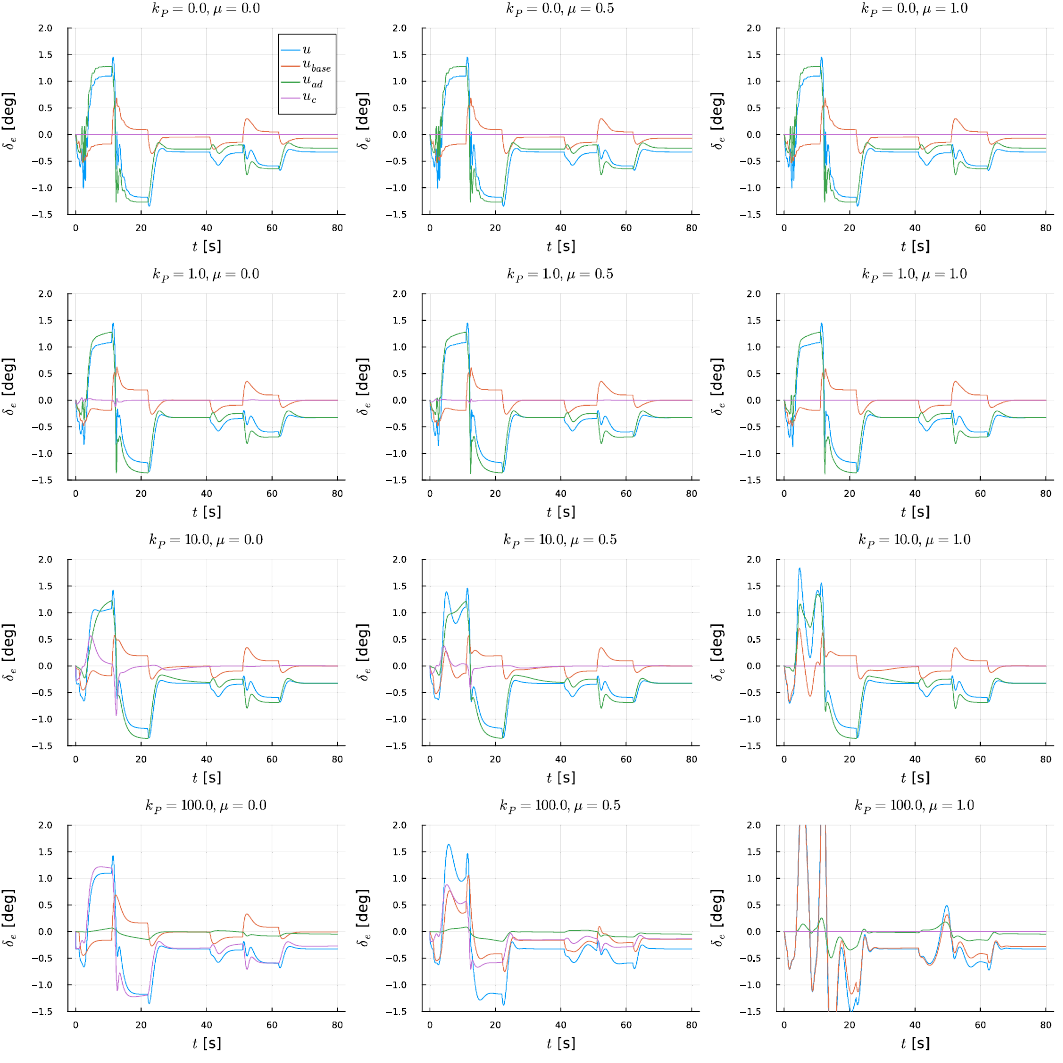}
		\caption{Elevator Deflection $\delta_{e}$}
		\label{Fig_u_coll}
	\end{center}
\end{figure}

\begin{figure}[!htbp]
	\begin{center}
		\includegraphics[width=\textwidth]{./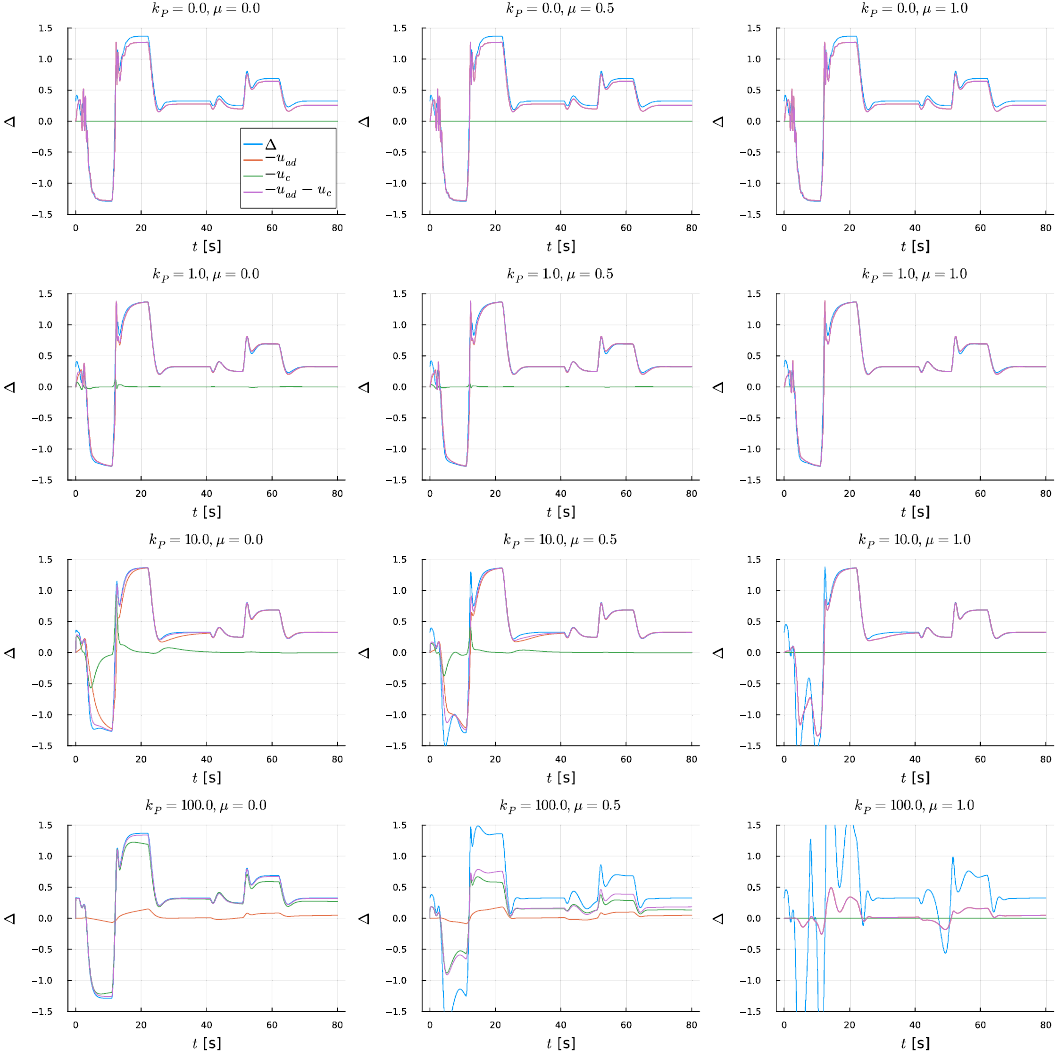}
		\caption{True Uncertainty $\Delta$ and Approximated Uncertainty}
		\label{Fig_Delta_aug_coll}
	\end{center}
\end{figure}

\begin{figure}[!htbp]
	\begin{center}
		\includegraphics[width=\textwidth]{./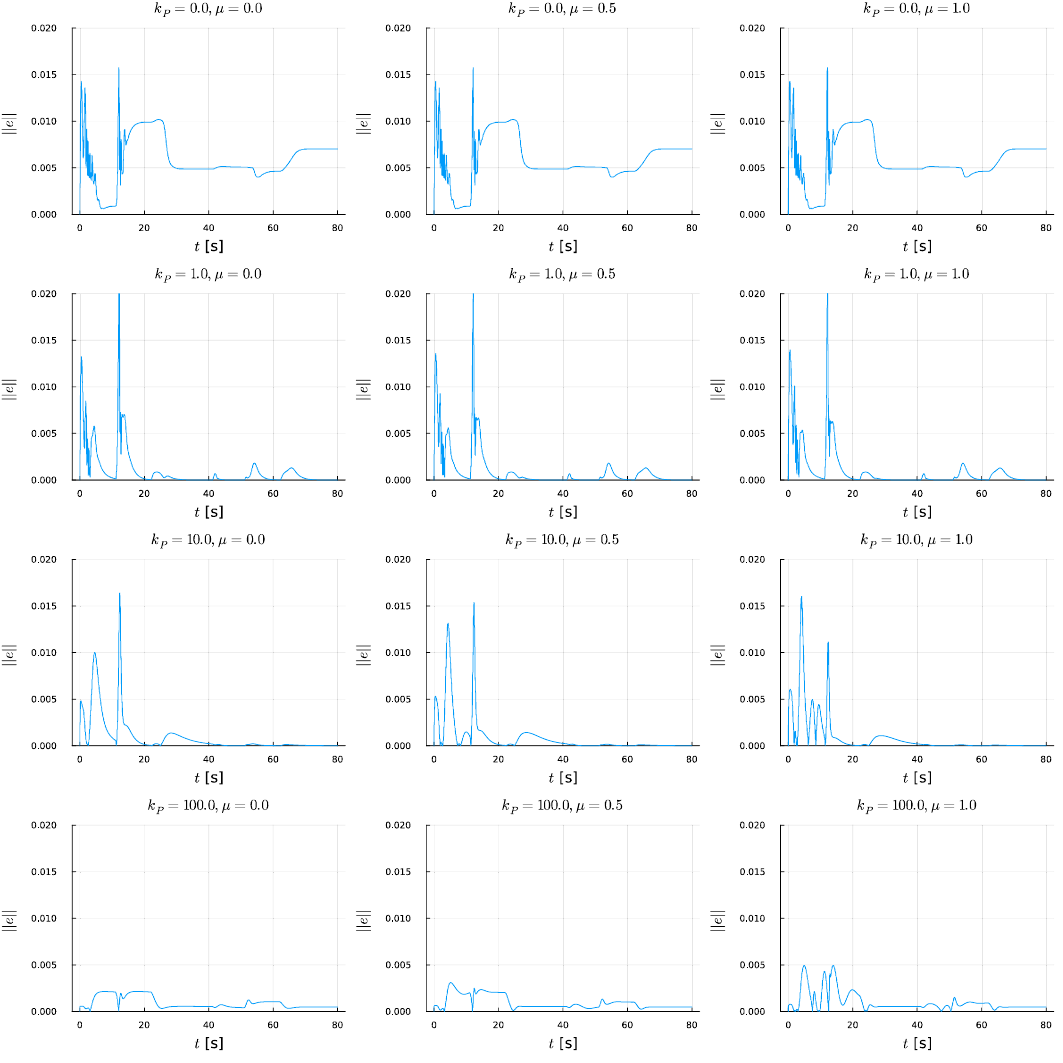}
		\caption{Norm of Tracking Error $\left\|e\right\|$}
		\label{Fig_V_e_coll}
	\end{center}
\end{figure}

\begin{figure}[!htbp]
	\begin{center}
		\includegraphics[width=\textwidth]{./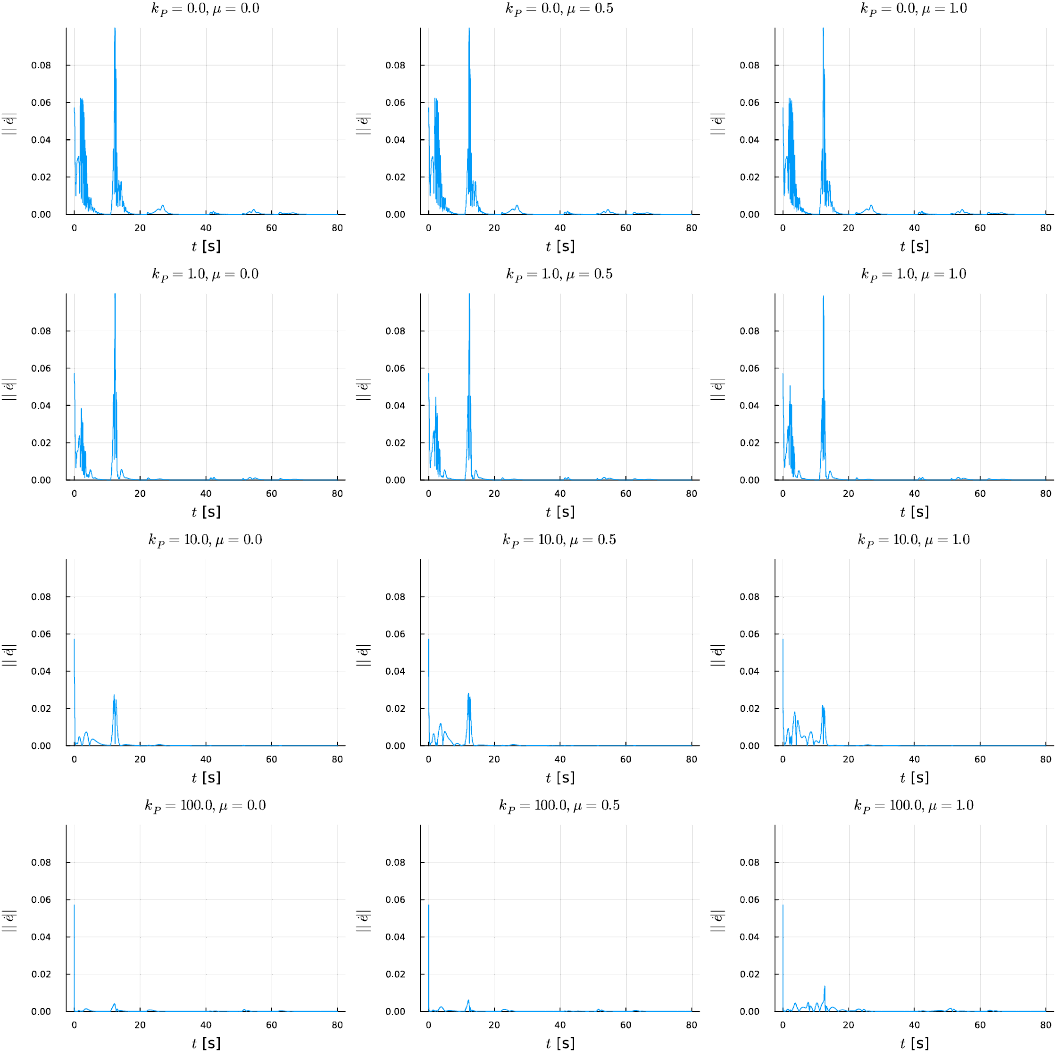}
		\caption{Norm of Time-Derivative of Tracking Error $\left\|\dot{e}\right\|$}
		\label{Fig_V_de_coll}
	\end{center}
\end{figure}

\begin{figure}[!htbp]
	\begin{center}
		\includegraphics[width=\textwidth]{./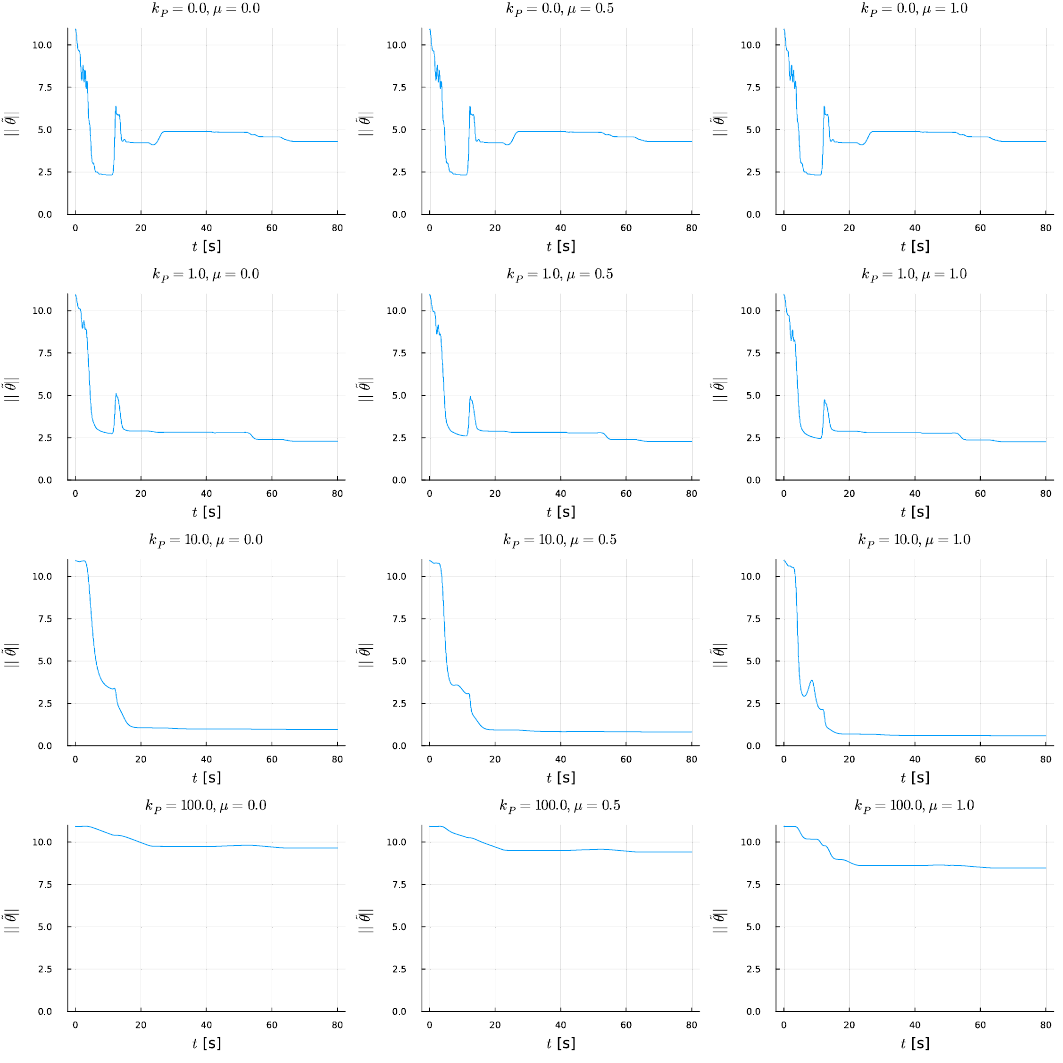}
		\caption{Norm of Parameter Estimation Error $\left\|\tilde{\theta}\right\|$}
		\label{Fig_V_theta_coll}
	\end{center}
\end{figure}

\begin{figure}[!htbp]
	\begin{center}
		\includegraphics[width=\textwidth]{./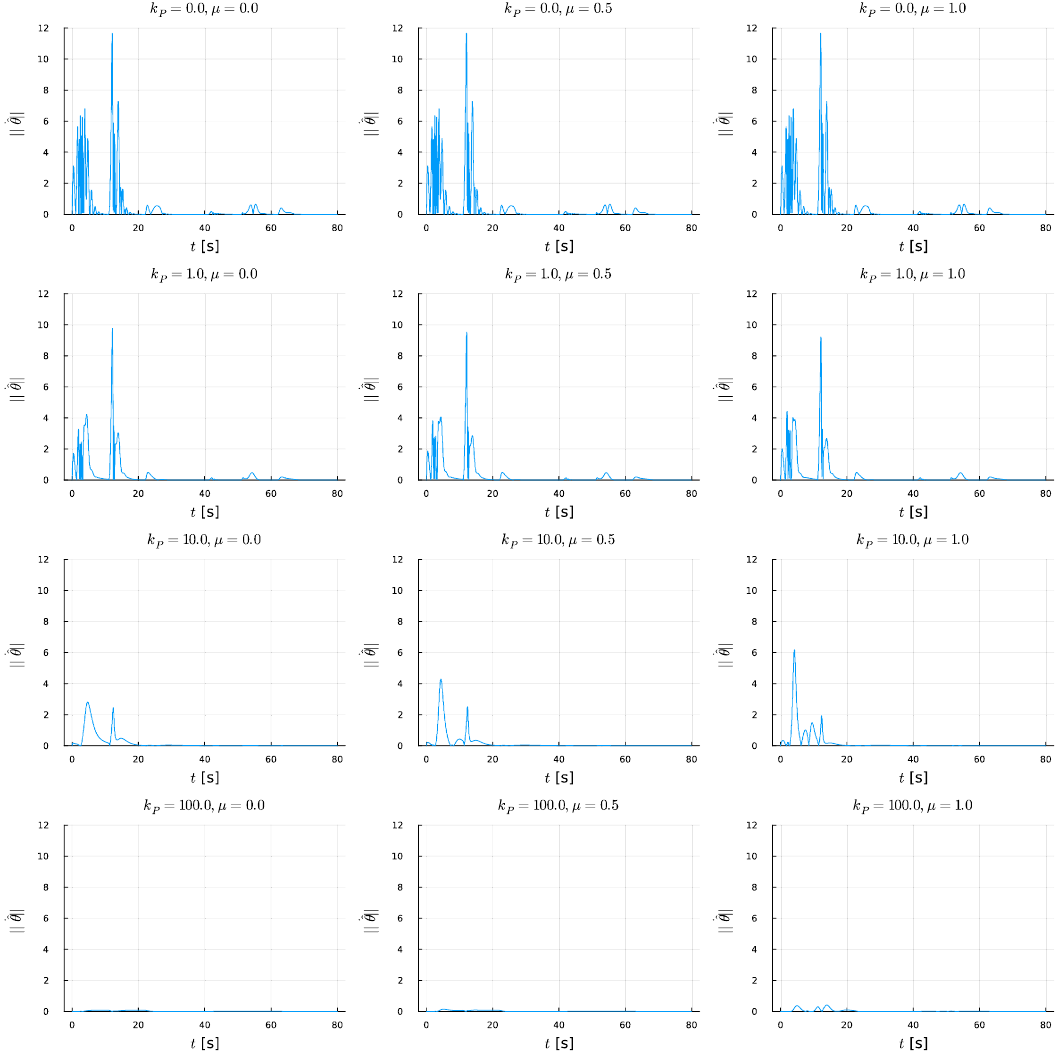}
		\caption{Norm of Time-Derivative of Estimated Parameter $\left\|\dot{\hat{\theta}}\right\|$}
		\label{Fig_V_dtheta_coll}
	\end{center}
\end{figure}

\section{Conclusions} \label{Sec:Concls}
Synchronisation of the physical plant and the virtual dynamic system was suggested as a design approach for adaptive control systems that natively addresses loop shaping goals as well as desired collective behaviour generation. The proposed approach introduces the design as well as allocation of the coupling input as the means of systematically adjusting the response characteristics. Regarding the design part, time-scale separation between the dynamics for synchronisation process and that for collective behaviour is central to alleviation of high-frequency oscillatory transients. Regarding the allocation part, reducing the contribution of the virtual dynamics input in synchronisation is central to alleviation of the undesired peaking phenomenon. The allocation minimising the virtual dynamics input allows complementary blending of the uncertainty cancellation behaviours between instantaneous rejection and online model learning according to the learning progress. This can be useful to keep consistent performance during the initial learning transients in the presence of unlearned uncertainties.

\appendix
\section*{Appendix: Uncertainty Approximation in Adaptive Control} \label{Sec:Delta_hat}
\renewcommand{\thesubsection}{\Alph{subsection}}
\setcounter{equation}{0}
\renewcommand{\theequation}{\thesubsection.\arabic{equation}}
\setcounter{REM}{0}
\renewcommand{\therem}{\thesubsection.\arabic{REM}}

In general, two different strategies are available for uncertainty approximation; i) instantaneous rejection of lumped uncertainty (Appendix \ref{SubSec:DOB}), and ii) online learning of a model (Appendix \ref{SubSec:ML}). Both approaches usually rely on a recursive state/disturbance observer since the time-derivative of plant state $\dot{x}$ cannot be measured in practice. The former approach applies the observer output directly as the input component $u_{ad}\left(t\right)$ for uncertainty cancellation. The latter approach takes the observer output as the target data, performs real-time regression using the streamed data sequence, and applies the predicted output of the learned model as the input component $u_{ad}\left(t\right)$ for uncertainty cancellation. 

\subsection{Instantaneous Rejection of Lumped Uncertainty} \label{SubSec:DOB}
\subsubsection{Observer}
Let us frame an observer to be a certainty-equivalent copy of the plant with an observation error feedback term as
\begin{equation} \label{Eq:sys_o}
	\Sigma_{o}: \qquad 
		\dot{\hat{x}}\left(t\right) = Ax\left(t\right) + bu\left(t\right) + A_{o}\tilde{x}\left(t\right), \quad
		\hat{x}\left(0\right) = \hat{x}_{0}
\end{equation}
where $\hat{x}\left(t\right)$ denotes the observer state, $\tilde{x}\left(t\right):= \hat{x}\left(t\right) - x\left(t\right)$ denotes the observation error, and $A_{o}$ is a Hurwitz matrix which is a design parameter. Then, the observation error dynamics can be obtained from Eqs. \eqref{Eq:sys_p} and \eqref{Eq:sys_o} as 
\begin{equation} \label{Eq:sys_f}
	\Sigma_{f} = \Sigma_{o} - \Sigma_{p}: \qquad 
		\dot{\tilde{x}}\left(t\right) = A_{o}\tilde{x}\left(t\right) - b\Delta\left(t\right), \quad
		\tilde{x}\left(0\right) = \hat{x}_{0}-x_{0}
\end{equation}
The evolution of $\tilde{x}\left(t\right)$ over time following Eq. \eqref{Eq:sys_f} is driven by the unknown quantity $\Delta\left(t\right)$, however, $\tilde{x}\left(t\right)$ is measurable as it can be obtained by computing the difference between $\hat{x}\left(t\right)$ and $x\left(t\right)$. 

The system $\Sigma_{f}$ in Eq. \eqref{Eq:sys_f} describes a low-pass filter for which bandwidth is determined by $A_{o}$. The simplest design is to take $A_{o} = -\omega_{f}I$ with a scalar constant $\omega_{f} > 0$. In this case, the observation error is related to the uncertainty as
\begin{equation} \label{Eq:tilde_x}
	\tilde{x}\left(t\right) = e^{-\omega_{f}t}\tilde{x}\left(0\right) - b\Delta_{f}\left(t\right)
\end{equation}
where the filtered uncertainty $\Delta_{f}\left(t\right)$ is given by
\begin{equation} \label{Eq:Delta_f}
	\Delta_{f}\left(t\right) = \int_{0}^{t}e^{-\omega_{f}\left(t-\tau\right)}\Delta\left(\tau\right)d\tau = C\left(s\right)\Delta\left(t\right)
\end{equation}
with $C\left(s\right) = \frac{\omega_{f}}{s+\omega_{f}}$. 

The instantaneous rejection approach is to set $\hat{\Delta}\left(t\right)$ in Eq. \eqref{Eq:u_ad} to be equal to $\Delta_{f}\left(t\right)$. It is impossible to measure $\Delta_{f}\left(t\right)$ directly, however, $b\Delta_{f}\left(t\right) = e^{-\omega_{f}t}\tilde{x}\left(0\right)-\tilde{x}\left(t\right) $ is a known signal since the right-hand-side can be computed. One can then solve the linear equation and get
\begin{equation} \label{Eq:Delta_hat_DOB}
	\hat{\Delta}\left(t\right) = b^{\dagger}\left\{ e^{-\omega_{f}t}\tilde{x}\left(0\right) -\tilde{x}\left(t\right) \right\}
\end{equation}
where $b^{\dagger} = \frac{b^{T}}{b^{T}b}$ for a vector $b \neq 0$. 

\subsubsection{Design Considerations}
Besides the separation between the bandwidth $\omega_{e}$ of $\Sigma_{e}^{CL}$ and the bandwidth $\omega_{f}$ of $\Sigma_{f}$ (discussed in the end of Sec. \ref{SubSubSec:Sigma_e_CL} and in the beginning of Sec. \ref{Sec:Delta_hat}), the choice of $\omega_{f}$ is also limited by the fact that the uncertainty lying beyond the actuator bandwidth $\omega_{a}$ cannot be cancelled by matching a plant input. The input component for uncertainty cancellation $u_{ad}\left(t\right) = -\hat{\Delta}\left(t\right)$ acts only through the physical actuator of the plant with a finite bandwidth. Hence, $\omega_{a}$ is in practice the maximum admissible value of $\omega_{f}$ for meaningful uncertainty cancellation. The same reasoning is the underlying design principle of the $\mathcal{L}_{1}$ adaptive control methods. In summary, a reasonable design guideline is to choose parameters to satisfy $\max \left|\lambda\left(A_{m}\right)\right| \ll \omega_{e} < \omega_{f} \leq \omega_{a}$.

\begin{remapp}
One may introduce an adaptive input component $U_{ad}\left(t\right)$ in the virtual system, in addition to the coupling input component $U_{m}\left(t\right)$, to cancel the remaining uncertainty component that could not be cancelled out by $u_{ad}\left(t\right)$ of the actual plant from $\Sigma_{e}$. Such examples include the case of unmatched uncertainty and the high-frequency uncertainty lying above $\omega_{a}$. Distributing the input of $\Sigma_{e}$ to both $\Sigma_{p}$ and $\Sigma_{m}$ not only for coupling but also for uncertainty cancellation might result in tighter synchronisation by reducing the discrepancy between $\Sigma_{e}$ and its ideal counterpart. However, it does not necessarily imply better performance characteristics in the plant state response in that the adaptive input component in the virtual system cannot cancel any uncertainty from $\Sigma_{p}^{CL}$ even though it has an effect in $\Sigma_{e}$.
\end{remapp}

\subsection{Online Learning of Uncertainty Model} \label{SubSec:ML}
\subsubsection{Overview}
The synchronisation approach prioritises stable synchronisation and separates out model learning in the sense of real-time identification as a secondary task, provided that the uncertainty approximation algorithm and the coupling input are designed respecting the separation between the time scales of associated dynamic systems. This separation is a natural architectural choice intended in the synchronisation-oriented approach rather than a result specific to a certain uncertainty model learning algorithm. 

Nevertheless, the behaviour of synchronisation dynamics $\Sigma_{e}^{CL}$ is not completely agnostic to the learning dynamics, i.e., adaptation algorithm, when using the predicted output of online-learned model in the controller. The model learning algorithm should be designed to guarantee stability of the overall feedback interconnected system, as well as asymptotic convergence of $e$ to zero in the absence of inevitable learning residual due to finite learning capability (or stronger boundedness properties in the presence of the learning residual).

Online model learning can improve transient characteristics essentially by removing the uncertainty approximation residual $\tilde{\Delta}\left(t\right)$ over time, as long as the feedback interconnection of the closed-loop synchronisation error dynamics and the model learning dynamics is stable as a whole. Constructing the signal $\hat{\Delta}\left(t\right)$ based on gradual model learning instead of instantaneous disturbance rejection for use in control can be formulated as the minimisation of a composite loss function consisting of a synchronisation loss and a regression loss.


\subsubsection{Online Model Learning as Parametric Regression}
Various regression algorithms can be adopted for online learning depending on the structure chosen/given for the uncertainty model \cite{Chowdhary2015, Joshi2019, Joshi2020, Boffi2022}. 
To avoid diverging from the main point, this study proceeds with a linear parametric model for the uncertainty that can be expressed as
\begin{equation} \label{Eq:Delta_linear}
	\Delta\left(t\right) = \Phi\left(x\left(t\right)\right)^{T}\theta
\end{equation}
where $\theta\in\mathbb{R}^{p}$ represents the uncertain constant parameter, and $\Phi\left(x\left(t\right)\right) \in\mathbb{R}^{p}$ represents the known state-dependent basis function which is also called the feature. For this model, the estimated uncertainty can be written as
\begin{equation} \label{Eq:Delta_hat_ML}
	\hat{\Delta}\left(t\right) = \Phi\left(x\left(t\right)\right)^{T}\hat{\theta}\left(t\right) 
\end{equation}
where $\hat{\theta}\left(t\right)$ represents the estimated parameter. 

The goal of online model learning is to achieve stable convergence of the parameter estimation error $\tilde{\theta}\left(t\right):= \hat{\theta}\left(t\right) - \theta$ to zero. It should be distinguished from the weaker goal of nullifying $\tilde{\Delta}\left(t\right)=\Phi\left(x\left(t\right)\right)^{T}\tilde{\theta}\left(t\right)$, since $\tilde{\theta}\left(t\right)$ can have nonzero component in the direction orthogonal to $\Phi\left(x\left(t\right)\right)$ even when $\tilde{\Delta}\left(t\right)$ is zero.

Model learning task can be formulated as minimisation of a loss function $L\left(Y\left(t;\theta\right),\hat{Y}\left(t;\hat{\theta}\left(t\right)\right)\right)$ where $Y\left(t;\theta\right)$ is a measurable target signal carrying information about $\theta$ and $\hat{Y}\left(t;\hat{\theta}\left(t\right)\right)$ is a predicted output given by a function of $\hat{\theta}\left(t\right)$. Unlike the Lyapunov function for analysis purpose, the loss function should be a function of known quantities to update the parameter by evaluating the local geometry of loss landscape. 

\subsubsection{Feature Construction Over Time}
In the continuous-time data streaming setup, the target signal $Y\left(t;\theta\right)$ can be \emph{constructed} by designing i) a regressor filter resembling the observation error system $\Sigma_{f}$, and ii) a feature extender which internally constructs and retains a mini-batch-like memory of the information extracted from the observed data signals. 

Let us define the regressor filter that generates $\Phi_{f}\left(t\right) \in \mathbb{R}^{p \times n}$ as
\begin{equation} \label{Eq:sys_Phif}
	\Sigma_{\Phi f}: \qquad 
		\dot{\Phi}_{f}^{T}\left(t\right) = A_{o}\Phi_{f}^{T}\left(t\right)-b\Phi\left(x\left(t\right)\right)^{T}, \quad
		\Phi_{f}\left(0\right) = 0
\end{equation}
with $A_{o}$ chosen for $\Sigma_{o}$ in Eq. \eqref{Eq:sys_o}. By substituting Eq. \eqref{Eq:Delta_linear} into Eq. \eqref{Eq:sys_f}, the observation error can be represented as
\begin{equation} \label{Eq:Phif_theta}
	\tilde{x}\left(t\right) = \Phi_{f}^{T}\left(t\right)\theta
\end{equation}
when the initial condition is given by $\tilde{x}\left(0\right)=0$. Then, the feature extender for constructing the information matrix $\Omega\left(t\right) \in \mathbb{R}^{q \times p}$ for some $q \geq 1$ and the target signal $Y\left(t;\theta\right)$ can be designed as a stable and time-varying filter that applies to the output of regressor filter and evolves according to
\begin{equation} \label{Eq:sys_Phie}
	\Sigma_{\Phi e}:~~~
	\begin{aligned}
		\dot{\Omega}\left(t\right) &= A_{Y}\left(t\right)\Omega\left(t\right)+B_{Y}\left(t\right)\Phi_{f}^{T}\left(t\right), &\Omega\left(0\right) = 0\\
		\dot{\eta}\left(t\right) &= A_{Y}\left(t\right)\eta\left(t\right) + B_{Y}\left(t\right)\tilde{x}\left(t\right), &\eta\left(0\right) = 0\\
		Y\left(t;\theta\right) &= \mathcal{M}\left[\eta\left(t\right)\right]
	\end{aligned}
\end{equation}
where $A_{Y}\left(t\right)\in\mathbb{R}^{q\times q}$ is a forgetting factor such that $A_{Y}\left(t\right) + A_{Y}^{T}\left(t\right)\leq 0$ for stability, $B_{Y}\left(t\right)\in\mathbb{R}^{q\times n}$ is an update factor or an exogenous driving signal, 
and $\mathcal{M}\left[\cdot\right]$ is a linear operator which generates a bounded output in the form of either \emph{square or tall} matrix. By substituting Eq. \eqref{Eq:Phif_theta} into Eq. \eqref{Eq:sys_Phie}, we have 
\begin{equation} \label{Eq:eta_theta}
	\eta\left(t\right) = \Omega\left(t\right) \theta 
\end{equation}
and therefore, the target signal can be represented as
\begin{equation} \label{Eq:Y}
	Y\left(t;\theta\right) = \mathcal{M}\left[\Omega\left(t\right)\right]\theta
\end{equation}
Correspondingly, the predicted output can be represented as
\begin{equation} \label{Eq:Y_hat}
	\hat{Y}\left(t;\hat{\theta}\left(t\right)\right) = \mathcal{M}\left[\Omega\left(t\right)\right]\hat{\theta}\left(t\right)
\end{equation}

\begin{remapp}
Conceptually, the design of the feature extender aims to achieve full rank and to condition the matrix $\mathcal{M}\left[\Omega\left(t\right)\right]$ in a short interval of time so that the unknown parameter $\theta$ becomes identifiable from the known signals $Y\left(t;\theta\right)$ and $\mathcal{M}\left[\Omega\left(t\right)\right]$. One possible strategy is to perform pointwise-in-time optimisation of the forgetting and update factors considering the change of singular value/vector structure (or eigenstructure) of $\mathcal{M}\left[\Omega\left(t\right)\right]$. 
\end{remapp}

\subsubsection{Design of Parameter Update Law for Stable Adaptation} \label{SubSubSec:stable_adapt}
The task is to design the update law that specifies $\dot{\hat{\theta}}\left(t\right)$ to achieve stable convergence of both errors $\mathbf{e}_{I}^{l}\left(t\right)$ and $\tilde{\theta}\left(t\right)$ to zero. Continuous-time recursive algorithms based on gradient flow and its variants can be applied for minimisation of the loss function using the information of local loss geometry evaluated at $\hat{\theta}\left(t\right)$, such as $L$, $\nabla_{\hat{\theta}}L$, and $\nabla_{\hat{\theta}}^{2}L$.

Let $V_{e}\left(\mathbf{e}_{I}^{l}\left(t\right) \right)= \frac{1}{2}{\mathbf{e}_{I}^{l}}^{T}\left(t\right)P\mathbf{e}_{I}^{l}\left(t\right)$ be the Lyapunov function certifying global exponential stability of $\Sigma_{e}^{CL}$'s origin under the stabilising coupling input $U_{c}\left(\mathbf{e}_{I}^{l}\left(t\right) \right)=-K_{e}\mathbf{e}_{I}^{l}\left(t\right)$ \emph{in the absence of uncertainty} where $P>0$ is the unique solution satisfying $\left(A_{e}-B_{e}K_{e}\right)^{T}P + P\left(A_{e}-B_{e}K_{e}\right) + Q =0$ for any given $Q>0$. Then, the Lyapunov function verifies
\begin{equation} \label{Eq:V_e_dot_nominal}
	\begin{aligned}
	\left.\dot{V}_{e}\right|_{\Delta=0} &= \nabla V_{e} \left.\dot{\mathbf{e}}_{I}^{l}\right|_{\Delta=0}
	={\mathbf{e}_{I}^{l}}^{T}P\left(A_{e}-B_{e}K_{e}\right)\mathbf{e}_{I}^{l} =\frac{1}{2}{\mathbf{e}_{I}^{l}}^{T} \left\{\left(A_{e}-B_{e}K_{e}\right)^{T}P + P\left(A_{e}-B_{e}K_{e}\right)\right\}\mathbf{e}_{I}^{l}\\
	& = -\frac{1}{2}{\mathbf{e}_{I}^{l}}^{T}Q\mathbf{e}_{I}^{l} < 0
	\end{aligned}
\end{equation}

Now, consider a Lyapunov function $V$ for certifying stability and convergence properties of the overall closed-loop adaptive control system \emph{in the presence of uncertainty}. Suppose that the Lyapunov function has an additive form $V = V_{e}\left(\mathbf{e}_{I}^{l}\left(t\right) \right) + V_{\tilde{\theta}}\left(\tilde{\theta}\left(t\right)\right)$ where $V_{\tilde{\theta}}$ is the component added in association with model learning. Let us choose 
\begin{equation} \label{Eq:V_theta}
	V_{\tilde{\theta}} = D_{\psi}\left(\theta, \hat{\theta} \right) = \psi\left(\theta\right) - \psi\left(\hat{\theta}\right) +\nabla\psi\left(\hat{\theta}\right)\tilde{\theta}
\end{equation}
where $D_{\psi}$ is the Bregman divergence associated with a continuously-differentiable and strictly convex function $\psi\left(\cdot\right)$. Then, $V_{\tilde{\theta}} \geq 0$ for all $\theta$ and $\hat{\theta}$ as a consequence of the convexity of $\psi\left(\cdot\right)$ and $V_{\tilde{\theta}}=0$ if and only if $\tilde{\theta}=0$ due to the strict convexity of $\psi\left(\cdot\right)$.

In general, we require the time-derivative of Lyapunov function along the trajectory of uncertain system to be at least negative semi-definite to guarantee Lyapunov stability. By definition, we have $\dot{V}_{\tilde{\theta}} = \frac{d}{dt}\nabla\psi\left(\hat{\theta}\right)\tilde{\theta}= \tilde{\theta}^{T}\nabla^{2}\psi\left(\hat{\theta}\right)\dot{\hat{\theta}}$. Therefore,
\begin{equation} \label{Eq:V_dot}
		\dot{V} = \dot{V}_{e} + \dot{V}_{\tilde{\theta}}
			= \left.\dot{V}_{e}\right|_{\Delta=0} + \frac{\partial V_{e}}{\partial e}b\Phi^{T}\tilde{\theta} + \dot{V}_{\tilde{\theta}}
			=-\frac{1}{2}{\mathbf{e}_{I}^{l}}^{T}Q\mathbf{e}_{I}^{l} + {\mathbf{e}_{I}^{l}}^{T}PB_{e}b\Phi^{T}\tilde{\theta} +  \tilde{\theta}^{T}\nabla^{2}\psi\left(\hat{\theta}\right)\dot{\hat{\theta}}
\end{equation}
If we choose the regression loss function also in the form of a Bregman divergence $L\left(Y,\hat{Y}\right):=D_{\phi}\left(Y,\hat{Y}\right)$ with a twice differentiable strictly convex potential function $\phi\left(\cdot\right)$, then 
\begin{equation} \label{Eq:L_grad}
	\nabla_{\hat{\theta}} L\left(Y,\hat{Y}\right) = \nabla_{\hat{Y}}D_{\phi}\left(Y,\hat{Y}\right) \nabla_{\hat{\theta}}\hat{Y}\left(\hat{\theta}\right)
	= \tilde{Y}^{T} \nabla^{2}\phi\left(\hat{Y}\right) \mathcal{M}\left[\Omega\right]
\end{equation}
where $\tilde{Y}\left(t\right) := \hat{Y}\left(t\right)-Y\left(t\right)$. Considering Eqs. \eqref{Eq:V_dot} and \eqref{Eq:L_grad}, the adaptation law inspired by the first-order (in time) gradient flow can be designed as
\begin{equation} \label{Eq:theta_hat_dot}
	\dot{\hat{\theta}} = - \left[\nabla^{2}\psi\left(\hat{\theta}\right)\right]^{-1} \left[\nabla_{\hat{\theta}}\left\{\dot{V}_{e} +  L\left(Y,\hat{Y}\right) \right\} \right]^{T}
	=- \left[\nabla^{2}\psi\left(\hat{\theta}\right)\right]^{-1} \left[\Phi{\mathbf{e}_{I}^{l}}^{T}PB_{e}b +  \mathcal{M}\left[\Omega\right]^{T}\nabla^{2}\phi\left(\hat{Y}\right)\tilde{Y} \right]
\end{equation}

\begin{thm}
	Consider the closed-loop system consisting of $\Sigma_{e}^{CL}$ given by Eq. \eqref{Eq:sys_e_CL} and the adaptation law given by Eq. \eqref{Eq:theta_hat_dot} where the uncertainty is modelled as in Eq. \eqref{Eq:Delta_linear}.
	Suppose that 
	\begin{enumerate}[label=\roman*)]
		\item the uncertainty basis function $\Phi\left(x\left(t\right)\right)$ is bounded
		\item $\frac{d}{dt}\mathcal{M}\left[\Omega\right]$ and $\frac{d}{dt}\nabla^{2}\phi\left(\hat{Y}\right)$ are bounded with appropriate choice of the operator $\mathcal{M}\left[\cdot\right]$ and the potential function $\phi\left(\cdot\right)$, respectively
	\end{enumerate}
	then $\mathbf{e}_{I}^{l}\left(t\right) \rightarrow 0$ as $t\rightarrow \infty$. Furthermore, provided in addition to the above conditions that
	\begin{enumerate}[label=\roman*)]
		\setcounter{enumi}{2}		
		\item the richness of the feature signal $\Phi\left(x\left(t\right)\right)$ verifies existence of $t_{ex}$ such that $\mathcal{M}\left[\Omega\right]$ attains full column rank for $\forall t \geq t_{ex}$ with appropriate choice of the regressor filter $\Sigma_{\Phi f}$ and the feature extender $\Sigma_{\Phi e}$
		\item $\nabla^{2}\phi\left(\cdot\right)>0$ with appropriate choice of the potential function $\phi\left(\cdot\right)$
	\end{enumerate}
	then $\left(\mathbf{e}_{I}^{l}\left(t\right), \tilde{\theta}\left(t\right)\right) \rightarrow \left(0,0\right)$ as $t\rightarrow \infty$.
\end{thm}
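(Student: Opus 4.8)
The plan is to use the additive Lyapunov candidate $V = V_{e}\left(\mathbf{e}_I^l\right) + V_{\tilde{\theta}}\left(\tilde{\theta}\right)$ already introduced, exploit the fact that the update law \eqref{Eq:theta_hat_dot} was constructed precisely so as to cancel the indefinite cross term in $\dot{V}$, and then invoke Barbalat's lemma twice. First I would substitute \eqref{Eq:theta_hat_dot} into the expression for $\dot{V}$ in Eq.~\eqref{Eq:V_dot}. Using $\tilde{\Delta}=\Phi^{T}\tilde{\theta}$ and $\tilde{Y}=\mathcal{M}\left[\Omega\right]\tilde{\theta}$ (which follow from Eqs.~\eqref{Eq:Delta_linear}, \eqref{Eq:Y}, \eqref{Eq:Y_hat}), the term $\tilde{\theta}^{T}\nabla^{2}\psi\left(\hat{\theta}\right)\dot{\hat{\theta}}$ splits into $-\left(\Phi^{T}\tilde{\theta}\right)\left({\mathbf{e}_I^l}^{T}PB_{e}b\right) - \tilde{Y}^{T}\nabla^{2}\phi\left(\hat{Y}\right)\tilde{Y}$. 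Since $\Phi^{T}\tilde{\theta}$ and ${\mathbf{e}_I^l}^{T}PB_{e}b$ are scalars, the first piece is identical to the cross term ${\mathbf{e}_I^l}^{T}PB_{e}b\,\Phi^{T}\tilde{\theta}$ in \eqref{Eq:V_dot} with opposite sign, so they cancel exactly, leaving
\[
	\dot{V} = -\tfrac{1}{2}{\mathbf{e}_I^l}^{T}Q\,\mathbf{e}_I^l - \tilde{Y}^{T}\nabla^{2}\phi\left(\hat{Y}\right)\tilde{Y} \leq 0,
\]
where nonpositivity of the second term follows from convexity of $\phi$. Hence $V$ is non-increasing and bounded, so $\mathbf{e}_I^l$ and $\tilde{\theta}$ are uniformly bounded.

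For the first claim I would apply Barbalat's lemma to the nonnegative map $t\mapsto \tfrac{1}{2}{\mathbf{e}_I^l}^{T}Q\,\mathbf{e}_I^l$. Its integral over $\left[0,\infty\right)$ is bounded by $V\left(0\right)-\lim V \leq V\left(0\right)$, and it is uniformly continuous because its derivative involves $\dot{\mathbf{e}}_I^l = \left(A_{e}-B_{e}K_{e}\right)\mathbf{e}_I^l + B_{e}b\,\Phi^{T}\tilde{\theta}$, which is bounded once $\mathbf{e}_I^l$ and $\tilde{\theta}$ are bounded and $\Phi$ is bounded (hypothesis~i). Therefore ${\mathbf{e}_I^l}^{T}Q\,\mathbf{e}_I^l \to 0$, and since $Q>0$ this gives $\mathbf{e}_I^l \to 0$.

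For the second claim I would repeat the Barbalat argument on $t\mapsto \tilde{Y}^{T}\nabla^{2}\phi\left(\hat{Y}\right)\tilde{Y}$, whose integral is likewise finite. Uniform continuity here is exactly what hypothesis~ii buys: differentiating produces $\tfrac{d}{dt}\nabla^{2}\phi\left(\hat{Y}\right)$ together with $\dot{\tilde{Y}}=\tfrac{d}{dt}\mathcal{M}\left[\Omega\right]\tilde{\theta} + \mathcal{M}\left[\Omega\right]\dot{\hat{\theta}}$, and these remain bounded provided $\mathcal{M}\left[\Omega\right]$, $\tfrac{d}{dt}\mathcal{M}\left[\Omega\right]$, $\nabla^{2}\phi\left(\hat{Y}\right)$, $\tfrac{d}{dt}\nabla^{2}\phi\left(\hat{Y}\right)$, and $\dot{\hat{\theta}}$ stay bounded. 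Thus $\tilde{Y}^{T}\nabla^{2}\phi\left(\hat{Y}\right)\tilde{Y}\to 0$, and the uniform positivity $\nabla^{2}\phi\left(\hat{Y}\right)>0$ of hypothesis~iv forces $\tilde{Y}=\mathcal{M}\left[\Omega\right]\tilde{\theta}\to 0$. Finally, for $t\geq t_{ex}$ the excitation hypothesis~iii renders $\mathcal{M}\left[\Omega\right]$ full column rank with a bounded left inverse, so $\tilde{\theta}\to 0$.

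I expect the main obstacle to be the bookkeeping of boundedness needed to justify uniform continuity for Barbalat's lemma — in particular closing the loop on $\dot{\hat{\theta}}$, which requires $\left[\nabla^{2}\psi\left(\hat{\theta}\right)\right]^{-1}$ and the regression residual term in \eqref{Eq:theta_hat_dot} to stay bounded, and upgrading the pointwise full-rank and positivity statements of hypotheses~iii and~iv to the uniform lower bounds (namely $\sigma_{\min}\left(\mathcal{M}\left[\Omega\right]\right)\geq\sigma>0$ and $\nabla^{2}\phi\geq\underline{\lambda}I$) that the concluding inversion step actually relies upon.
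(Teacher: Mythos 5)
Your proof is correct and follows essentially the same route as the paper: the same additive Lyapunov function $V=V_{e}+V_{\tilde{\theta}}$, the same exact cancellation of the indefinite cross term by the update law, nonpositivity of the remaining quadratic in $\tilde{Y}$ from convexity of $\phi$, and Barbalat's lemma to extract convergence. The only differences are cosmetic --- the paper applies Barbalat once to $\dot{V}$ (via boundedness of $\ddot{V}$ under condition ii) and then squeezes, rather than to each nonpositive integrand separately --- and your explicit second Barbalat pass together with the remark that the final inversion step really needs uniform lower bounds on $\sigma_{\min}\left(\mathcal{M}\left[\Omega\right]\right)$ and $\nabla^{2}\phi$ is, if anything, more careful than the paper's terse ``$\dot{V}<0$ for $t\geq t_{ex}$, hence both errors converge'' conclusion.
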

	
\begin{proof}
Along with the facts that $Q>0$ and $\phi\left(\cdot\right)$ is convex, substituting Eq. \eqref{Eq:theta_hat_dot} into Eq. \eqref{Eq:V_dot} yields
\begin{equation} \label{Eq:V_dot_adapted}
	\begin{aligned}
		\dot{V} &= -\frac{1}{2}{\mathbf{e}_{I}^{l}}^{T}Q\mathbf{e}_{I}^{l} -  \tilde{\theta}^{T} \mathcal{M}\left[\Omega\right]^{T}\nabla^{2}\phi\left(\hat{Y}\right)\tilde{Y} 
		=-\frac{1}{2}{\mathbf{e}_{I}^{l}}^{T}Q\mathbf{e}_{I}^{l} -  \tilde{\theta}^{T} \mathcal{M}\left[\Omega\right]^{T}\nabla^{2}\phi\left(\hat{Y}\right)\mathcal{M}\left[\Omega\right]\tilde{\theta}\\
		&\leq-\frac{1}{2}{\mathbf{e}_{I}^{l}}^{T}Q\mathbf{e}_{I}^{l}\leq 0
	\end{aligned}
\end{equation}
Thus, $0\leq V\left(t\right) \leq V\left(t_{0}\right)$, $\forall t\geq t_{0}$, which indicates the boundedness of $\mathbf{e}_{I}^{l}$ and $\tilde{\theta}$. Under condition i), the signals $\Phi_{f}$, $\tilde{x}$, $\Omega$, $\dot{\Omega}$, $\eta$, $\dot{\eta}$, $\dot{\mathbf{e}}_{I}^{l}=\left(A_{e}-B_{e}K_{e}\right)\mathbf{e}_{I}^{l}+B_{e}b\Phi^{T}\tilde{\theta}$ and $\dot{\hat{\theta}}$ given by Eq. \eqref{Eq:theta_hat_dot} are bounded. Also, $\ddot{V}\left(t\right)$ is bounded under condition ii). In this setting, the Barbalat's lemma ensures that $\dot{V}\left(t\right) \rightarrow 0$ as $t\rightarrow \infty$. Consequently, the squeeze theorem leads to the asymptotic convergence of tracking error, i.e., $\mathbf{e}_{I}^{l}\left(t\right) \rightarrow 0$ as $t\rightarrow \infty$.

Furthermore, conditions iii) and iv) together ensure that $\mathcal{M}\left[\Omega\right]^{T}\nabla^{2}\phi\left(\hat{Y}\right)\mathcal{M}\left[\Omega\right]>0$, $\forall t \geq t_{ex}$. As a result, $\dot{V}\left(t\right)<0$ for $\forall t \geq t_{ex}$, hence both tracking and model learning errors converge to zero, i.e., $\left(\mathbf{e}_{I}^{l}\left(t\right), \tilde{\theta}\left(t\right)\right) \rightarrow \left(0,0\right)$ as $t\rightarrow \infty$.
\end{proof}

\begin{remapp}
	The gradient flow given in Eq. \eqref{Eq:theta_hat_dot} minimises the composite loss function $L_{total}=\dot{V}_{e} +  L$. Inclusion of $\dot{V}_{e}$ as one of the loss function resembles the velocity gradient approach \cite{Andrievsky2021,Fradkov2022b}. Also, considering the connection between robustifying modification in adaptive control and regularisation in machine learning as highlighted in \cite{Gaudio2019}, one may obtain robust adaptation laws that resemble $\mu$/$e$-modification by including a suitable strictly convex regularisation term in the composite loss function.
\end{remapp}

\begin{remapp}
	The \emph{Lyapunov function} $V_{\tilde{\theta}}$ is introduced mainly for stability analysis and does not have to be measurable, whereas the \emph{regression loss function} $L$ is introduced as a performance index for optimisation and hence its gradient needs to be evaluated. The choice adopting Bregman divergence as $V_{\tilde{\theta}}$ in Eq. \eqref{Eq:V_theta} and as $L$ in Eq. \eqref{Eq:theta_hat_dot} is motivated by the recent results inspired by natural gradient descent and mirror descent such as \cite{Lee2018, Boffi2021} to allow for generalisation of the first-order adaptation laws beyond those resulting from quadratic functions.
\end{remapp}

\begin{remapp}
	The weighting/scaling factors in the potential functions $\psi\left(\cdot\right)$ and $\phi\left(\cdot\right)$ introduce the learning rates to the adaptation law. For example, defining $\psi\left(x\right)=\frac{1}{2}x^{T}\Gamma^{-1}x$ with $\Gamma>0$ gives $\left[\nabla^{2}\psi\left(\hat{\theta}\right) \right]^{-1}=\Gamma$.
\end{remapp}

\bibliographystyle{new-aiaa-nhcho}
\bibliography{SyncAC}
\end{document}